\documentclass{article}
\usepackage[preprint]{neurips_2026}

\usepackage{hyperref}       
\usepackage{url}            
\usepackage{booktabs}       
\usepackage{amsfonts}       
\usepackage{nicefrac}       
\usepackage{microtype}      
\usepackage{xcolor}         

\usepackage{graphicx} 
\usepackage{amsfonts}
\usepackage{amsthm}
\usepackage{amsmath}

\newtheorem{theorem}{Theorem}
\newtheorem{claim}[theorem]{Claim}
\newtheorem{lemma}[theorem]{Lemma}
\newtheorem{corollary}[theorem]{Corollary}

\newcommand{\X}{\mathcal{X}}
\newcommand{\E}{\mathbb{E}}
\newcommand{\N}{\mathbb{N}}

\allowdisplaybreaks
\title{\bf The Robotaxi Placement Problem:\\Minimizing Expected ETA for Stochastic Demand}

\author{Ioannis Caragiannis \And Kostas Kollias \And Mohammad Roghani \And Aaron Schild \And Ali Kemal Sinop}

\begin{document}

\maketitle

\begin{abstract}
Autonomous ride-hailing platforms must strategically position idle robotaxis to minimize the wait times of prospective riders. We formalize this as the \emph{robotaxi placement problem} ($k$-RP). Given a finite metric space and a demand distribution over its points, the goal is to position $k$ robotaxis to minimize the expected total distance in a perfect matching between the robotaxis and $k$ random riders. We present several theoretical results for this stochastic optimization problem. First, we observe that sampling robotaxi locations independently according to the demand distribution yields a randomized $2$-approximation algorithm. Second, we present an explicit inapproximability bound via a novel gap-preserving reduction from the maximum coverage problem. Furthermore, while it is not even clear whether the exact expected cost of a placement can be computed efficiently on general metrics, we design an exact polynomial-time dynamic programming algorithm for $k$-RP in tree metrics by decoupling the stochastic matching dependencies. Finally, empirical evaluations on real-world ride-hailing data reveal that a variance-reduced random placement strategy is highly effective in practice, yielding expected wait times that are very close to those obtained by computationally heavy exact algorithms for the uniform capacitated $k$-median problem.
\end{abstract}

\section{Introduction}

The rise of autonomous ride-hailing services, commonly known as robotaxis, promises to fundamentally transform urban mobility. For both platforms and riders, minimizing the \emph{estimated time of arrival} (ETA) is arguably the most critical metric for success. From the platform's perspective, low ETAs translate to higher rider satisfaction, increased retention rates, and vastly improved fleet utilization. From the rider's perspective, wait time is a primary factor in choosing a ride-hailing service over personal vehicle ownership or public transit. Behind the scenes, the mathematical engine driving this efficiency is the platform's matching algorithm, which continuously pairs available vehicles with incoming rider requests. The quality of these matchings directly dictates the operational viability of the entire business model~\cite{alonsomora2017ondemand}.

Major industry players such as Waymo, Tesla, and Apollo Go face a continuous operational challenge: when vehicles are idle, where should they be proactively positioned (or ``parked'') to best anticipate future demand? Unlike traditional ride-hailing platforms like Uber and Lyft (as well as conventional taxi services), where human drivers independently decide where to cruise or park, robotaxi fleets are fully centrally controlled. This centralized architecture provides platforms with the ability to explicitly optimize the spatial distribution of their idle fleet. However, this ability introduces a fundamental algorithmic challenge: anticipating demand and prepositioning vehicles such that the eventual matchings yield the lowest possible wait times.

Motivated by this, we introduce the \emph{robotaxi placement problem} ($k$-RP). We are given a finite metric space representing the city's road network, a fleet of $k$ robotaxis, and a probability distribution over the metric space indicating where prospective riders are likely to appear. The goal is to determine the optimal locations for the $k$ robotaxis. Crucially, the objective is not just to place vehicles close to high-probability areas, but to minimize the \emph{expected total distance} of the minimum-weight perfect matching between the $k$ robotaxis and $k$ random riders drawn independently from the demand distribution. 

While $k$-RP shares conceptual similarities with classical \emph{facility location} problems, particularly the $k$-median problem, it fundamentally departs from them in two key dimensions. The first is the rigid matching requirement. In the standard uncapacitated $k$-median problem, a single facility can serve an unlimited number of local demands. In contrast, $k$-RP imposes a strict vehicle capacity of exactly one. If multiple riders appear in the same geographical area, only one can be served by the nearest robotaxi; the rest must be routed to vehicles further away, causing cascading ETA delays. The second dimension is its \emph{stochastic optimization} nature. Rather than optimizing over a fixed, known set of demand points, the placement decisions in $k$-RP must be made \emph{a priori}. The true demand is realized only after the robotaxis are positioned, meaning the objective requires minimizing the expected cost of a matching evaluated over exponentially many potential demand realizations. Together, this combination of strict capacity constraints and stochastic demand makes the problem highly non-trivial to analyze, positioning our work as an important theoretical bridge between classical facility location and modern stochastic fleet management.

\subsection{Our Contributions and Techniques}

We present several theoretical results for the robotaxi placement problem. Our first result is a randomized $2$-approximation algorithm for $k$-RP, which follows from the observation that the perfect matching distance between two $k$-multisets satisfies the triangle inequality. By sampling $k$ points independently from the demand distribution, the triangle inequality combined with the linearity of expectation immediately yields a simple but elegant $2$-approximation bound. 

Our second contribution is establishing the computational hardness of the problem. We show that $k$-RP is APX-hard, proving an explicit inapproximability bound of $1.0234$. We employ a novel gap-preserving reduction from the maximum coverage problem~\cite{feige1998threshold}. The crux of the analysis involves tightly bounding the expected cost of the matchings in the ``partial covering'' case, which requires overcoming the technical challenge of analyzing the complex dependencies that arise when matching independent random riders to the (capacity-constrained) robotaxis.

Furthermore, we show that when the underlying space is a tree metric, the problem becomes highly tractable. This is particularly notable because simply \emph{evaluating} the exact expected cost of a fixed placement of $k$ robotaxis on a general metric does not seem to be easy (we suspect it is \#P-hard). For tree metrics, however, we design an exact $O(n \cdot k^3)$ time dynamic programming algorithm. The key insight is that an optimal matching has a ``no-crossing'' property, which allows us to completely decouple the stochastic dependencies of the matchings. We can instead calculate the contribution of each tree edge to the expected ETA independently, allowing a dynamic program to compute the exact optimum in a bottom-up fashion.

Finally, we complement our theoretical findings by conducting an empirical evaluation using real-world ride-hailing data from New York City. We assess the practical performance of our random placement algorithm, a variance-reduced variant (VRRP) of it, and a computationally heavy exact solver for the uniform capacitated $k$-median problem (UC$k$M). Our experiments reveal that while UC$k$M yields the best expected wait times, its computational overhead may be prohibitive for large instances. Strikingly, the VRRP strategy bridges this gap; it significantly outperforms basic random placement and delivers wait times very close to the exact baseline, demonstrating that variance-reduced sampling yields highly effective and scalable placements in practice.

\subsection{Related Work}

Our robotaxi placement problem bridges foundational concepts in theoretical stochastic optimization and facility location with modern operational challenges in autonomous ride-hailing.

In the theoretical computer science literature, $k$-RP can be viewed through the lens of \emph{a priori optimization}, a framework pioneered by Bertsimas~\cite{bertsimas1988probabilistic, bertsimas1990priori} for problems like the probabilistic traveling salesperson. In an a priori model, a baseline structural solution is constructed before demand is realized, and a predefined recourse strategy is applied once the specific random variables are instantiated. Similarly, $k$-RP acts as a two-stage stochastic optimization problem~\cite{shmoys2004approximation, ravi2004multicommodity}, where the first stage commits to $k$ robotaxi locations, and the second stage matching provides the recourse. However, existing two-stage stochastic facility location frameworks typically allow open facilities to serve multiple demands up to a loose capacity, or they completely lack strict capacity constraints. In contrast, $k$-RP enforces a rigid one-to-one matching requirement upon the realized demand.

This strict capacity constraint deeply connects our formulation to the capacitated $k$-median problem. While the uncapacitated version of the problem has been extensively studied---culminating in the recent breakthrough that yields a $(2+\epsilon)$-approximation algorithm by Cohen-Addad et al.~\cite{cohenaddad2022breakthrough}---enforcing capacities makes the problem historically much more challenging to approximate~\cite{chuzhoy2005capacitated, byrka2015approximation}. A common approach in the literature to bypass this difficulty is to design bicriteria approximation algorithms that violate the capacities by a small factor~\cite{charikar2002constant, KPR00}. However, such relaxations are fundamentally incompatible with $k$-RP, which requires a rigid minimum-weight perfect matching without any allowable capacity violations. Even for the strict uniform capacitated $k$-median problem, true constant-factor approximations without capacity violations have been remarkably difficult to obtain, with major progress achieved by Li~\cite{Li17}, alongside recent advances from the perspective of parameterized complexity by Cohen-Addad and Li~\cite{cohenaddadL2019parameterized}. 

Our techniques are also inspired by more distant problem areas. For example, 
evaluating the expected cost of such random matchings traces its theoretical roots back to foundational work on the optimal matching of random points in metric spaces, famously captured by the AKT theorem~\cite{ajtai1984optimal}, as well as bounds for matching random demands to fixed deterministic locations~\cite{leighton1989tight}. We also note that the $k$-multiset metric we define to achieve our randomized approximation is conceptually related to the Wasserstein (or earth mover's) distance~\cite{rubner2000earth}, applied here in a discrete, combinatorial setting. Finally, regarding computational complexity, our hardness of approximation results build upon the classical bounds for the maximum coverage problem, famously established by Feige~\cite{feige1998threshold}.

Beyond the theoretical community, our work is heavily motivated by the operations research, transportation, and AI literature on ride-sharing and dynamic fleet repositioning. The operational necessity of repositioning empty vehicles to anticipate demand has been widely studied~\cite{pavone2012robotic}. Approaches range from continuous queueing networks and fluid approximations~\cite{braverman2019empty} to receding horizon and model predictive control~\cite{miao2016taxi, iglesias2018data}, as well as dynamic trip-vehicle assignment systems~\cite{alonsomora2017ondemand}. In recent years, data-driven approaches---particularly multi-agent deep reinforcement learning---have become popular for dispatching and repositioning in large-scale networks~\cite{lin2018efficient, holler2019deep}. Because evaluating exact spatial placements over time suffers from the curse of dimensionality, this applied literature adeptly handles the temporal and dynamic complexities using heuristics, fluid limits, and learning algorithms. Our work complements this literature by abstracting the core spatial challenge. By isolating the static snapshot of the stochastic placement problem, we provide the foundational, rigorous theoretical approximation guarantees and complexity bounds that underlie these dynamic systems.

\section{Preliminaries}
The robotaxi placement problem we consider is a generalization of the median problem, which is defined as follows. We are given a metric space $(\X, d)$ where $\X$ is a set of $n$ points and $d: \X \times \X \to \mathbb{R}_{\geq 0}$ is a distance function and a probability distribution $P$ over the points of $\X$, i.e., such that $\sum_{x\in \X}{P(x)} = 1$. The median problem is to select a point $x$ from $X$ that minimizes the quantity $\sum_{y\in \X}{P(y)\cdot d(x,y)}$.

In the $k$-RP problem (standing for robotaxi placement), we are again given a metric space $(\X, d)$, where $\X$ is a set of $n$ points and $d: \X \times \X \to \mathbb{R}_{\geq 0}$ is a distance function, and a probability distribution $P$ over $\X$ such that $\sum_{x\in \X}{P(x)} = 1$. In the robotaxi placement jargon, the metric space indicates the map of locations in which robotaxis and riders can lie, and the probability $P(x)$ denotes the probability that a rider can show up at point $x$.

Denote by $\X_k$ the set of all $k$-sized multisets (or, simply, $k$-multisets) of points from $\X$. For two $k$-multisets of points $U$ and $V$ from $\X_k$, we define the distance $d_k(U,V)$ between $U$ and $V$ to be the cost of the minimum-cost perfect matching in the complete bipartite graph defined by the two node sides $U$ and $V$ so that the cost of edge $(u,v)$ is equal to $d(u,v)$. We can view the $k$-multiset $U$ as the locations of $k$ robotaxis and the $k$-multiset $V$ as the locations of $k$ riders. Then the distance $d_k(U,V)$ is the total ETA experienced by the riders in their best possible matching with the available robotaxis. Denote by $P_k$ the probability distribution that returns a $k$-multiset of points from $\X_k$ by selecting $k$ points from $\X$ i.i.d.~according to $P$. 

The objective of the $k$-RP problem is to find a $k$-multiset of points $S\in \X_k$ so that the quantity $\sum_{X\in \X_k}{P_k(X)\cdot d_k(S,X)}$ (or, equivalently, $\E_{X\sim P_k}[d_k(S,X)]$) is minimized. We usually refer to this quantity as the cost of the $k$-RP solution $S$. Hence, $k$-RP for the metric space $(\X,d)$ and the probability distribution $P$ is equivalent to the median problem for the metric space $(\X_k,d_k)$ and the probability distribution $P_k$. In words, we want to find the best placement of $k$ robotaxis so that the expected total ETA for serving $k$ random riders is minimized.

\section{Approximation algorithms for robotaxi placement}
We begin with an important observation.
\begin{lemma}\label{lem:metric}
    The pair $(\X_k,d_k)$ forms a metric space.\footnote{Alternatively, we can think of set of $k$ robotaxis as a probability distribution over the points of the metric space consisting of $k$ probability masses of $1/k$ distributed among the points. Then, the distance $d_k$ between two sets of points is essentially the Wasserstein distance between the corresponding distributions, multiplied by $k$.}
\end{lemma}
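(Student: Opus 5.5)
We verify the metric axioms for $d_k$ directly from its definition as a minimum-cost perfect matching. Throughout, we identify a $k$-multiset $U$ with an indexed list $(u_1,\dots,u_k)$. A perfect matching between $U$ and $V$ then corresponds to a permutation $\sigma$ of $\{1,\dots,k\}$, and
\[
d_k(U,V)=\min_{\sigma}\sum_{i=1}^k d(u_i,v_{\sigma(i)}).
\]
This value does not depend on the chosen indexing, since re-indexing only relabels the permutations. Strictly speaking, $d_k$ is a pseudometric if $d$ is; we will show it is a genuine metric on multisets whenever $d$ is a metric on $\X$.

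\emph{Nonnegativity and symmetry.} Nonnegativity is immediate because each term is nonnegative. For symmetry, a permutation $\sigma$ matching $U$ to $V$ gives the permutation $\sigma^{-1}$ matching $V$ to $U$. Since $d$ is symmetric, the two matchings have the same cost, so the two minima coincide.

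\emph{Identity of indiscernibles.} If $U=V$ as multisets, choose the indexing so that $u_i=v_i$; the identity permutation then has cost $0$. Conversely, suppose $d_k(U,V)=0$. Then some $\sigma$ has $d(u_i,v_{\sigma(i)})=0$ for every $i$, so $u_i=v_{\sigma(i)}$ for all $i$ because $d$ is a metric. A bijection pairing equal elements shows that every point of $\X$ occurs with the same multiplicity in $U$ and in $V$, hence $U=V$.

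\emph{Triangle inequality.} This is the only step with any content. Take $U,V,W\in\X_k$, let $\sigma$ be an optimal matching from $U$ to $V$, and let $\tau$ be an optimal matching from $V$ to $W$. Composing them gives the permutation $\tau\circ\sigma$, a perfect matching from $U$ to $W$. Applying the triangle inequality of $d$ termwise,
\[
d_k(U,W)\le\sum_i d(u_i,w_{\tau(\sigma(i))})\le\sum_i d(u_i,v_{\sigma(i)})+\sum_i d(v_{\sigma(i)},w_{\tau(\sigma(i))}).
\]
The first sum equals $d_k(U,V)$. Because $\sigma$ is a bijection, the second sum runs over all indices $j=\sigma(i)$ and equals $\sum_j d(v_j,w_{\tau(j)})=d_k(V,W)$.

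The only subtle point is bookkeeping with multisets: composing matchings must be done on indexed copies rather than on points, so that repeated points are handled correctly. Fixing indexings at the outset, as above, takes care of this.
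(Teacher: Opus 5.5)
Your proof is correct and follows the paper's argument. The paper also proves the triangle inequality by composing the optimal matchings $M_{UV}$ and $M_{VW}$, applying the triangle inequality of $d$ termwise, and using that $M_{UV}$ is a bijection to recover $d_k(V,W)$; your checks of the other metric axioms and your indexing of multisets simply make explicit what the paper leaves implicit.
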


\begin{proof}
For a perfect matching $M$ between the $k$-multisets of points $U$ and $V$ from $\X_k$, we denote by $M(u)$ the point in $V$ to which point $u$ is matched. For every pair of $k$-multisets $U$ and $V$ from $\X_k$, denote by $M_{UV}$ the perfect matching between the points in $U$ and $V$ that defines the distance $d_k(U,V)$, i.e., $M_{UV}$ minimizes the quantity $\sum_{u\in U}{d(u,M(u))}$ over all perfect matchings $M$ between the points of $U$ and $V$. 

Now consider three $k$-multisets $U$, $V$, and $W$ of $\X_k$. Let $M$ be a perfect matching that assigns the point $u\in U$ to the point $M_{VW}(M_{UV}(u))$. Notice that each point $u \in U$ is matched to a distinct point of $V$ under $M_{UV}$ and each point $v\in V$ is matched to a distinct point of $W$ under $M_{VW}$. Hence, $M$ is indeed a perfect matching between the $k$-multisets $U$ and $W$. We have
\begin{align*}
d_k(U,W) &= \sum_{u\in U}{d(u,M_{UW}(u))} \leq \sum_{u\in U}{d(u,M(u))} = {\sum_{u\in U}{d(u,M_{VW}(M_{UV}(u)))}}\\
&\leq \sum_{u\in U}{\left(d(u,M_{UV}(u))+d(M_{UV}(u),M_{VW}(M_{UV}(u)))\right)}\\
&=\sum_{u\in U}{d(u,M_{UV}(u))}+\sum_{v\in V}{d(v,M_{VW}(v))}=d_k(U,V)+d_k(V,W),
\end{align*}
implying that $d_k$ satisfies the triangle inequality. The first and fourth equalities follow by the definitions of distance $d_k$. The second one follows by the definition of matching $M$ and the third one follows since for each point $u\in U$, $M_{UV}(u)$ is a distinct point of $V$. The first inequality follows by the definition of the perfect matching $M_{UW}$. The second one follows by the triangle inequality of the distance $d$ for points $u$, $M_{UV}(u)$, and $M_{VW}(M_{UV}(u))$.
\end{proof}

We now consider the {\em random placement} algorithm that selects a random $k$-multiset from $\X_k$ according to $P_k$. Equivalently, we form a random $k$-multiset by selecting $k$ i.i.d.~points from $\X$ according to $P$. 

\begin{theorem}\label{thm:rp-main}
Consider a $k$-RP instance and let $O$ be the optimal $k$-multiset and $S$ the random $k$-multiset returned by the random placement algorithm. It holds that $\E_{S\sim P_k}\E_{X\sim P_k}\left[d_k(S,X)\right] \leq 2\cdot \E_{X\sim P_k}\left[d_k(O,X)\right]$.
\end{theorem}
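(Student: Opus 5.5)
The plan is to use the triangle inequality for $d_k$ from Lemma~\ref{lem:metric}, routing through the optimal multiset $O$, and then to use the fact that $S$ and $X$ are identically distributed.

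Fix any pair of $k$-multisets $S,X\in\X_k$. Lemma~\ref{lem:metric} gives $d_k(S,X)\le d_k(S,O)+d_k(O,X)$. We will also need symmetry, $d_k(S,O)=d_k(O,S)$. This holds because a perfect matching between two multisets is symmetric in its two sides and $d$ itself is symmetric. Hence
\begin{align*}
d_k(S,X)\le d_k(O,S)+d_k(O,X).
\end{align*}

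Next, take expectations with $S$ and $X$ drawn independently from $P_k$. By linearity of expectation,
\begin{align*}
\E_{S\sim P_k}\E_{X\sim P_k}\left[d_k(S,X)\right]\le \E_{S\sim P_k}\left[d_k(O,S)\right]+\E_{X\sim P_k}\left[d_k(O,X)\right].
\end{align*}
The first term on the right depends only on $S$, and the second only on $X$. Because $S$ and $X$ share the same distribution $P_k$, the two terms are equal. The right-hand side is therefore $2\cdot\E_{X\sim P_k}[d_k(O,X)]$, which is exactly the claim.

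The argument is essentially immediate once Lemma~\ref{lem:metric} is available. The only point needing care is that the lemma explicitly proves the triangle inequality but not symmetry, and symmetry is what lets us write $d_k(S,O)$ as $d_k(O,S)$. This is the one step I expect to justify explicitly, by noting that reversing a perfect matching $M$ between $U$ and $V$ gives a perfect matching between $V$ and $U$ with the same cost, since $d(u,v)=d(v,u)$. No property of $O$ beyond its being a fixed $k$-multiset is used, so the bound in fact holds against any fixed placement, and in particular against the optimal one.
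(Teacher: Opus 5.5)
Your proof is correct and follows the paper's argument exactly: the triangle inequality for $d_k$ routed through $O$, linearity of expectation, and the fact that $S$ and $X$ are both distributed as $P_k$. Your explicit justification of the symmetry $d_k(S,O)=d_k(O,S)$, which you get by reversing a perfect matching, covers a step the paper uses without comment.
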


\begin{proof}
We have that 
\begin{align*}
    \E_{S\sim P_k}\E_{X\sim P_k}\left[d_k(S,X)\right] &\leq \E_{S\sim P_k}\E_{X\sim P_k}\left[d_k(S,O)+d_k(O,X)\right]\\
    &=\E_{S\sim P_k}\left[d_k(O,S)\right]+\E_{X\sim P_k}\left[d_k(O,X)\right]\\
    &=2\cdot \E_{X\sim P_k}\left[d_k(O,X)\right],
\end{align*}
as desired. The first derivation follows by the triangle inequality (Lemma~\ref{lem:metric}), the second one is due to linearity of expectation, and the third one is straightforward. 
\end{proof}

\begin{corollary}\label{cor:2-apx-det}
    Consider a $k$-RP instance and let $O$ be the optimal $k$-multiset. There exists a $k$-multiset $S$ consisting of points that are drawn with positive probability from $P$ so that $\E_{X\sim P_k}\left[d_k(S,X)\right] \leq 2\cdot \E_{X\sim P_k}\left[d_k(O,X)\right]$.
\end{corollary}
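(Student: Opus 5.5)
This follows from Theorem~\ref{thm:rp-main} by the probabilistic method: a random variable cannot always exceed its mean. Define $f(S)=\E_{X\sim P_k}[d_k(S,X)]$ for each $S\in\X_k$. Theorem~\ref{thm:rp-main} states that $\E_{S\sim P_k}[f(S)]\le 2\cdot f(O)$.

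Since $\X$ is finite, $P_k$ is a distribution on the finite set $\X_k$. Its support consists exactly of the $k$-multisets whose points all have positive probability under $P$. This holds because $P_k(S)$ is a multinomial coefficient times $\prod_{x\in S}P(x)$, which is positive if and only if every $x\in S$ has $P(x)>0$.

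Write the expectation as a convex combination,
\begin{equation*}
\E_{S\sim P_k}[f(S)]=\sum_{S:\,P_k(S)>0}P_k(S)\,f(S),
\end{equation*}
with positive weights summing to $1$. Suppose every $S$ in the support had $f(S)>2f(O)$. Then the weighted average would also be strictly greater than $2f(O)$, contradicting Theorem~\ref{thm:rp-main}. Hence some $S$ with $P_k(S)>0$ satisfies $f(S)\le 2f(O)$. This $S$ consists only of points drawn with positive probability under $P$, which is the claim.

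The only step needing care is the identification of the support of $P_k$, and the formula for $P_k(S)$ settles it. No real obstacle arises, because finiteness of $\X_k$ guarantees that the minimum of $f$ over the support exists and is at most the mean.
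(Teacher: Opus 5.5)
Your proof is correct and matches the paper's reasoning. The paper states the corollary as an immediate consequence of Theorem~\ref{thm:rp-main} without a written proof: some $k$-multiset in the support of $P_k$ must do at least as well as the expectation. Your identification of that support via the multinomial formula for $P_k(S)$ is exactly the detail needed to conclude that the chosen points have positive probability under $P$.
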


The bound of Corollary~\ref{cor:2-apx-det} is tight (see Appendix~\ref{sec:2-apx-tightness}), implying that the optimal placement may use locations not included in the support of $P$.

\section{Hardness of approximating $k$-RP}\label{sec:apx-hardness}
We devote this section to our APX-hardness result. In the proof of Theorem~\ref{thm:apx-hardness} below, we present our reduction and the statements of technical lemmas that yield the result. Formal proofs of the lemmas can be found in Appendix~\ref{sec:apx-lemmas}.
\begin{theorem}\label{thm:apx-hardness}
For every constant $\gamma \in (0,0.0234]$, $k$-RP is NP-hard to approximate within a factor better than $1.0234-\gamma$.
\end{theorem}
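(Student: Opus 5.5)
We reduce from maximum coverage, following Feige~\cite{feige1998threshold}. For every $\epsilon>0$ it is NP-hard to tell apart two kinds of max-coverage instances $(\mathcal{U},\mathcal{F},k)$: those in which some $k$ sets cover all of $\mathcal{U}$, and those in which every $k$ sets cover at most a $(1-1/e+\epsilon)$ fraction of $\mathcal{U}$. We may restrict to instances with an exact cover, where the sets are disjoint, equal-sized, and of size $|\mathcal{U}|/k$.

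\textbf{Construction.} We build a two-level metric.
\begin{itemize}
\item There is one \emph{set point} $s_F$ for each $F\in\mathcal{F}$ and one \emph{element point} $e_u$ for each $u\in\mathcal{U}$.
\item We set $d(e_u,s_F)=1$ if $u\in F$, and otherwise take the shortest-path metric of this bipartite incidence graph, truncated at $3$. All other distances are at least $2$.
\item The demand $P$ is uniform over the element points.
\item We ask for $k$ robotaxis.
\end{itemize}
Riders then appear only at element points. A robotaxi parked at $s_F$ serves a rider from $F$ at cost $1$, and any rider at cost at most $3$. A robotaxi parked at an element point serves a rider there at cost $0$, but others at cost $2$ or more. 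We may need to tune the truncation value, or add a scaling parameter, to optimise the final constant. The rest of the plan is to show that the expected matching cost separates between the two cases by a constant factor $1.0234$.

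\textbf{Completeness.} Place the robotaxis on the $k$ sets of an exact cover. Each rider lies in exactly one chosen set, which is uniformly random among the $k$. The $k$ riders behave like $k$ balls thrown into $k$ bins, each bin holding one robotaxi. In this balls-into-bins matching a rider pays $1$ if it gets the robotaxi of its own set. Otherwise it pays the distance between sets through the metric, which we arrange to equal $3$. The number of riders matched at cost $1$ equals the number of nonempty bins. Its expectation is $k(1-(1-1/k)^k)\approx k(1-1/e)$. So the cost is $c_{\text{yes}} \approx k\left[(1-1/e)\cdot 1+(1/e)\cdot 3\right]$, up to a $(1+o(1))$ factor.

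\textbf{Soundness.} This is the main obstacle. We take any placement, including placements at element points or with repeated points, and lower bound its expected cost. First we argue that moving robotaxis from element points to set points costs at most an $o(k)$ loss, or handle that case by a direct charging argument. Then a rider pays $1$ only if it lands in the union of the chosen sets and its covering robotaxi is still free. Otherwise it pays $3$, or $2$ or $0$ for the element-placement case.

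So the expected number of cost-$1$ matches is at most the expected size of a maximum matching in the bipartite graph between riders and chosen sets. Each rider falls in the covered region with probability at most $1-1/e+\epsilon$. By König/Hall, this maximum matching is at most the minimum over subsets $T$ of chosen sets of $|T| + \#\{\text{riders not covered by } T\}$. Taking expectations, with linearity plus a concentration step (Chernoff over the $k$ i.i.d.\ riders), gives an upper bound of the form $k\,g(1-1/e+\epsilon)$, which is strictly smaller than $k(1-1/e)$. The delicate part is that the sets may overlap. We would bound the matching by the weaker quantity $\min\bigl(\#\text{covered riders},\ \#\text{nonempty "bins"}\bigr)$, where each covered rider is assigned to one covering set and the resulting bin-occupancy is analysed as a non-uniform balls-into-bins process. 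Its expected number of nonempty bins is maximised by uniform loads. This yields an expected cost of $c_{\text{no}} \ge k\left[3-2\,h(1-1/e)\right]$ for an explicit function $h<1-1/e$.

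Finally we compute $c_{\text{no}}/c_{\text{yes}}$ numerically, optimising any free parameters, to obtain $1.0234$. The $\gamma$ in Theorem~\ref{thm:apx-hardness} absorbs $\epsilon$ and the $o(1)$ terms for large $k$. NP-hardness of the gap problem then gives the claimed inapproximability.
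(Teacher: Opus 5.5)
\textbf{There is a genuine gap.} Your completeness computation is fine for your metric. The soundness half, which carries the whole theorem, is not established, and the surrogate you propose for it is invalid.

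\textbf{Soundness.} Counting covered riders cannot help. In the no case up to $(1-1/e+\varepsilon)k$ riders are covered in expectation, which already matches the $k(1-(1-1/k)^k)\approx(1-1/e)k$ cost-$1$ matches of the yes case. So the gap must come from collisions. Your quantity $\min(\#\text{covered riders},\#\text{nonempty bins})$, with each covered element assigned to one covering set, is not an upper bound on the maximum matching once sets overlap. Two riders in $A\cap B$, both assigned to $A$, give one nonempty bin but a matching of size $2$. Your K\"onig expression $|T|+\#\{\text{riders not covered by }T\}$ fails on the same example with $T=\{A\}$, because it is not a vertex cover.

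The missing idea is to exploit that all sets have size $N/k$. The paper splits elements into $U_0,U_1,U_{\ge 2}$ by how many chosen sets contain them. Since the $k$ chosen sets have total size exactly $N$, we get $|U_1|+2|U_{\ge 2}|\le N$, i.e.\ $|U_1|\ge N-2|U_0|$. Riders in $U_{\ge 2}$ are pessimistically allowed cost $1$. Riders in $U_1$ fall into the \emph{disjoint} bins $U_S$, so your uniform-loads convexity argument is legitimate there. It gives $\E[Y]\ge k(1-1/k)^{\E|L\cap U_1|}$, where $Y$ is the number of chosen sets hit by no $U_1$-rider. The cost-$1$ matches are then at most $|L\cap U_{\ge 2}|+k-Y$. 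Minimizing over $x=\E|L\cap U_0|/k\ge 1/e-\varepsilon$ and $y=\E|L\cap U_1|/k\ge 1-2x$ gives about $(1+1/e-e^{-1+2/e})k\approx 0.600k$ cost-$1$ matches, against $0.632k$ in the yes case. This trade-off (overlap lowers coverage, while less overlap forces collisions in disjoint bins) is absent from your plan, so your function $h$ is never produced.

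\textbf{The metric.} Your metric also breaks the reduction to set-point placements. The triangle inequality forces $d(e_u,e_{u'})\le 2$ whenever $u,u'$ share a set, while an uncovered rider costs $3$ from a set point. So a robotaxi parked on an element point can genuinely beat a set point, and your claimed $o(k)$ loss from moving it has no justification. The paper avoids this by using non-incidence distance $2-\varepsilon$ and element--element distance $2$. A robotaxi on an element point then pays $2$ unless its own point is drawn, which happens with probability at most $k/N\le\varepsilon/2$, so swapping it for any set point strictly helps.

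That choice also pins the bad cost near $2$ and yields the constant $(1+e^{-1+2/e}-e^{-1})/(1+e^{-1})\approx 1.0234$. With bad cost $3$ your ratio would be a different number. So ``optimize numerically to obtain $1.0234$'' is an assertion, not a derivation.
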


\begin{proof}
We use a gap-preserving reduction from the maximum $\ell$-coverage problem \cite{feige1998threshold}. In particular, Feige~\cite{feige1998threshold} proved that given an instance of $\ell$-coverage consisting of a universe of elements $U=[N]$ ($N$ is a multiple of $\ell$), a collection $\mathcal{S}=\{S_1, ..., S_m\}$ of subsets of $U$, each of size $N/\ell$, and an associated constant parameter $\varepsilon>0$, it is NP-hard to distinguish between the following two cases:
\begin{itemize}
    \item There are $\ell$ sets in $\mathcal{S}$ that cover all elements of $U$ ({\em full covering} instance). 
    \item Any $\ell$ sets from $\mathcal{S}$ cover at most a $1-1/e+\varepsilon$ fraction of the elements of $U$ ({\em partial covering} instance).
\end{itemize}

We define the following reduction that transforms an instance of maximum $\ell$-coverage to instances of $k$-RP. Given an instance $I_{\text{cov}}$ of $\ell$-coverage that consists of a universe $U$ of $N$ elements, a collection $\mathcal{S}$ of $m$ subsets of $U$ of size $N/\ell$, and $\varepsilon>0$, we construct the instance $I_{\text{rp}}$ of $k$-RP that consists of a metric space $(\mathcal{X},d)$ and probability distribution $P$ as follows. Set $k=\ell$. $\mathcal{X}$ has an {\em element point} for each element in $U$ and a {\em set point} for each set of $\mathcal{S}$. Hence, $n=N+m$. With some abuse in notation, we denote a point of $\mathcal{X}$ by its corresponding element or set in $I_{\text{cov}}$. Hence, $\mathcal{X}=U\cup \mathcal{S}$. The distance function $d$ is defined as follows. For every element point $u\in U$ and set point $S\in \mathcal{S}$, we have $d(u,S)=1$ if $u\in S$ and $d(u,S)=2-\varepsilon$ otherwise. For any two distinct element points $u_1,u_2\in U$, we have $d(u_1,u_2)=2$ and for any two distinct set points $S_1,S_2\in \mathcal{S}$, we have $d(S_1,S_2)=1$. Clearly, $d$ satisfies the triangle inequality. The probability distribution $P$ selects equiprobably among the element points.

We can assume that $I_{\text{cov}}$ is a hard instance obtained by Feige's construction in~\cite{feige1998threshold} with small $\varepsilon$, e.g.,
\begin{align}\label{eq:epsilon-small}
    \varepsilon &\leq 0.06.
\end{align}
However, $\varepsilon$ affects the other parameters of $I_{\text{cov}}$ which satisfy 
\begin{align}\label{eq:epsilon-defn}
    \varepsilon\geq \frac{1}{e}-\left(1-\frac{1}{\ell}\right)^\ell=\frac{1}{e}-\left(1-\frac{1}{k}\right)^k
\end{align}
and have set size $N/\ell$ equal to a (high-degree) polynomial of $1/\varepsilon$. In the following, we will assume that the instance $I_{\text{cov}}$ satisfies the weaker inequality
\begin{align}\label{eq:2k-over-k}
\varepsilon &\geq \frac{2\ell}{N} =\frac{2k}{N}.
\end{align}

We first prove that the optimal solution of $k$-RP for instance $I_{\text{cov}}$ has a special structure.

\begin{lemma}\label{lem:only-set-points}
    Any optimal solution of $k$-RP for instance $I_{\text{rp}}$ consists of $k$ set points.
\end{lemma}

Due to Lemma~\ref{lem:only-set-points}, the $k$-RP solutions for instance $I_{\text{rp}}$ we consider in the following consist of $k$ set points. Theorem~\ref{thm:apx-hardness} will follow from the next two consequences of our reduction.
\begin{lemma} \label{lem:full}
If $I_{\text{cov}}$ is a full covering instance, there exists a $k$-RP solution of cost at most $k\cdot (1+1/e)$ for instance $I_{\text{rp}}$.
\end{lemma}
    
\begin{lemma}\label{lem:partial}
If $I_{\text{cov}}$ is a partial covering instance, any $k$-RP solution for instance $I_{\text{rp}}$ has cost at least $k\cdot (1+e^{-1+2/e}-e^{-1}-3\, (1+1/e)\cdot \varepsilon)$.
\end{lemma}

We are now ready to complete the proof of Theorem~\ref{thm:apx-hardness}. For constant $\gamma\in (0,0.0234]$, we use a hard instance $I_{\text{cov}}$ of $\ell$-coverage with $\varepsilon=\min\{0.06,\gamma/3\}$. Then, by Lemmas~\ref{lem:full} and~\ref{lem:partial}, we conclude that a polynomial-time algorithm which approximates $k$-RP within a factor better than $\frac{1+e^{-1+2/e}-1/e-3\,\left(1+\frac{1}{e}\right)\cdot \varepsilon}{1+1/e}\geq 1.0234-\gamma$ could be used to decide whether $I_{\text{cov}}$ is a full or a partial covering instance of $\ell$-coverage, completing the proof.
\end{proof}

\section{A polynomial-time algorithm for tree metrics}
We remark that the seemingly simpler (compared to solving $k$-RP) task of computing the expected ETA of a given robotaxi placement exactly is a challenging task in general (we suspect it is $\#P$-hard). Interestingly, when the metric space is a tree metric, the stochastic nature of the problem is no longer an obstacle for computing the cost of a $k$-RP solution exactly. In this case, we can even solve $k$-RP in polynomial time. We explain how to do this below.

In a tree metric, the distance between any pair of nodes is the sum of distances between the endpoints of the edges that belong to the path connecting the two nodes. Hence, an alternative way to measure the cost of a matching between two $k$-sized sets of nodes $X$ and $S$ is to measure the contribution of each edge of the tree to this cost. If an edge $e$ belongs to $n_e(M)$ paths defined by matching $M$ between the nodes of sets $X$ and $S$, then the cost of $M$ is $\sum_e{n_e(M)\cdot c(e)}$, where $c(e)$ is the distance between the endpoints of edge $e$.

Consider an edge $e$ and the two subtrees that the removal of edge $e$ defines. We say that the matching $M$ {\em crosses} edge $e$ if there are two nodes $x_1\in X$ and $s_1\in S$ that belong to one of the subtrees, two nodes $x_2\in X$ and $s_2\in S$ that belong to the other subtree, so that $x_1$ is matched to $s_2$ and $x_2$ is matched to $s_1$ in $M$. The next statement follows from the triangle inequality; the proof appears in Appendix~\ref{sec:claim:no-edge-crossing}.

\begin{claim}\label{claim:no-edge-crossing}
    There exists a minimum-cost perfect matching between the $k$-multisets $X$ and $S$ that does not cross any edge of the tree metric.
\end{claim}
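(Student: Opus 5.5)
We will prove Claim~\ref{claim:no-edge-crossing} by an exchange argument. Among all minimum-cost perfect matchings between $X$ and $S$, pick one, $M$, that minimizes the potential $\Phi(M)=\sum_e n_e(M)$, the total number of matched pairs whose tree path uses edge $e$, summed over all edges. We will show that such an $M$ crosses no edge. This is the same as minimizing total path length in hop count, used only as a tie-breaker.

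Suppose $M$ crosses some edge $e$. Removing $e$ leaves two subtrees $T_1$ and $T_2$. By the definition of crossing, there are $x_1,s_1\in T_1$ and $x_2,s_2\in T_2$ with $(x_1,s_2),(x_2,s_1)\in M$. Let $M'$ be obtained from $M$ by replacing these two pairs with $(x_1,s_1)$ and $(x_2,s_2)$; this is still a perfect matching between the multisets. Write $e=(a,b)$ with $a\in T_1$ and $b\in T_2$. Every path from $T_1$ to $T_2$ passes through $e$, so
\[
d(x_1,s_2)=d(x_1,a)+c(e)+d(b,s_2), \qquad d(x_2,s_1)=d(x_2,b)+c(e)+d(a,s_1).
\]
The triangle inequality gives $d(x_1,s_1)\le d(x_1,a)+d(a,s_1)$ and $d(x_2,s_2)\le d(x_2,b)+d(b,s_2)$. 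Therefore
\[
d(x_1,s_1)+d(x_2,s_2)\le d(x_1,s_2)+d(x_2,s_1)-2c(e),
\]
so $M'$ costs at most the cost of $M$. If $c(e)>0$ this is a strict decrease, contradicting the optimality of $M$.

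If zero-weight edges are allowed, $M'$ is still optimal, and we argue with the same inequality applied to hop counts. The tree path from $x_1$ to $s_1$ lies inside the walk $x_1\to a\to s_1$, and the path from $x_2$ to $s_2$ lies inside $x_2\to b\to s_2$. So the two new paths use each edge at most as often as the two old paths together did, and they never use $e$, which both old paths used. Hence $\Phi(M')\le\Phi(M)-2$, contradicting the choice of $M$. Since $\Phi$ is a nonnegative integer, the minimizer exists, and this completes the argument.

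The only delicate step is the edge-multiplicity comparison. Concretely, the edges of the path from $x_1$ to $s_1$ must be a sub-multiset of the edges of the path from $x_1$ to $a$ together with those of the path from $a$ to $s_1$. This holds because, in a tree, the unique path between two nodes is obtained from any walk between them by cancelling backtracking. With that in place the remaining steps are routine, and positive edge weights would not even need the potential argument.
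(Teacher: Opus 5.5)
Your proof is correct and uses the same approach as the paper's appendix proof: the same uncrossing swap $(x_1,s_2),(x_2,s_1)\mapsto(x_1,s_1),(x_2,s_2)$, justified by the same triangle-inequality computation through the endpoints of $e$. The one difference is how you conclude. The paper just says to repeat the swap for every crossing and never argues that this terminates. Your extremal choice of an optimal matching minimizing the hop-count potential $\Phi$ makes that step rigorous, and it also covers zero-cost edges such as those created by the paper's binarization footnote.
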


Now, consider a minimum-cost perfect matching $M$ between the node $k$-multisets $S$ and $X$ that contains no edge crossings. Then, the quantity $n_e(M)$ is equal to the discrepancy of the $k$-multisets $S$ and $X$ in the two subtrees defined by the removal of edge $e$. I.e., denoting by $T_e$ one of the two subtrees defined by the removal of edge $e$ (it does not matter which one as both subtrees define the same quantity), we have $n_e(M)=\big||X\cap T_e|-|S\cap T_e|\big|$. Hence, the objective of the $k$-RP is equivalently to compute a $k$-multiset $S$ from $\X_k$ so that the quantity 
\begin{align}\label{eq:sum-of-edge-contr}
\E_{X\sim P_k}[d_k(S,X)] &= \sum_e{\E_{X\sim P_k}\left[\big||X\cap T_e|-|S\cap T_e|\big|\right]\cdot c(e)}
\end{align}
is minimized.

Notice that, once we have decided that $t$ nodes of set $S$ belong to subtree $T_e$, the expectation in Equation (\ref{eq:sum-of-edge-contr}) is easy to compute since the distribution of the cardinality of set $X\cap T_e$ when the nodes in $X$ are drawn independently according to the probability distribution $P$ is the binomial distribution with $k$ trials and success probability $p_e=\sum_{s \in T_e}{P(s)}$. Define
\begin{align}
    B(t,p) &= \sum_{s=0}^k{|t-s|\cdot \binom{k}{s}\cdot p^s\cdot (1-p)^{k-s}},
\end{align}
and notice that the quantity $B(t,p)$ can be computed in time $O(k)$. Then, Equation (\ref{eq:sum-of-edge-contr}) becomes
\begin{align*}
\E_{X\sim P_k}[d_k(S,X)] &= \sum_e{B(|S\cap T_e|,p_e)\cdot c(e)}.
\end{align*}
Clearly, evaluating the cost of the $k$-RP solution $S$ can be done in time $O(n\cdot k)$.

We will now show how to use dynamic programming to compute an optimal $k$-RP solution. Without loss of generality, we can assume that the tree metric is a binary tree rooted at a node $h$, such that each node $u$ that is not a leaf has two children $\ell(u)$ and $r(u)$.\footnote{Any tree metric can be transformed to an equivalent having this form; all we need to do after rooting the tree at a node is to replace each node with $c>2$ children with a binary subtree that is rooted at the node and has $c$ leaves. The cost of each edge to a leaf is equal to the cost of the original edge to the leaf, and the cost of any other edge in the binary subtree is $0$.} We slightly abuse notation to denote by $T_u$ the subtree rooted at node $u$. We also set $p_u=\sum_{s \in T_u}{P(s)}.$

For each node $u$ of the tree and integer $t\in \{0,1, ..., k\}$, we compute the table entry $V(u,t)$ which keeps the minimum contribution of the edges in the subtree $T_u$ to the expected cost of any minimum-cost perfect matching between any $k$-multiset $S$ that contains $t$ elements in the subtree $T_u$ and a multiset of $k$ random nodes, drawn independently from the tree metric $\X$, according to probability distribution $P$. This can be computed in a bottom-up fashion, by setting
\begin{align*}
    V(u,t) &= 0,
\end{align*}
for every leaf node $u$ and integer $t=0, 1, ..., k$,
\begin{align*}
    V(u,t) &= \min_{\substack{t_\ell,t_r\in \N:\\t_\ell+t_r\leq t}}\left\{V(\ell(u),t_\ell)+B(t_\ell,p_{\ell(u)})\cdot c(u,\ell(u))\right.\\
    &\quad\quad\quad\quad\quad\quad \left.+V(r(u),t_r)+B(t_r,p_{r(u)})\cdot c(u,r(u))\right\},
\end{align*}
for every non-leaf node $u$ and integer $t=0, 1, ..., k$.
The entry $V(h,k)$ will contain the cost of the optimal $k$-RP solution. We can precompute the $n\cdot k$ values $B(\cdot, \cdot)$ in time $O(n\cdot k^2)$ and, then, each of the $n\cdot k$ values of table $V$ in time $O(k^2)$, i.e., time $O(n\cdot k^3)$ in total.

The next statement summarizes the discussion in this section.\footnote{The algorithm can be further optimized to run in time $O(n\cdot k^2)$. Details are omitted to keep the exposition simple.}

\begin{theorem}
    The dynamic programming algorithm described above solves $k$-RP in tree metrics with $n$ node locations in time $O(n\cdot k^3)$.
\end{theorem}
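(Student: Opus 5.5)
We prove correctness and running time separately. The correctness argument starts from the edge decomposition of the cost. By Claim~\ref{claim:no-edge-crossing}, for every pair of $k$-multisets $S$ and $X$ there is a min-cost matching $M$ that crosses no edge. For such an $M$ we have $n_e(M)=\big||X\cap T_e|-|S\cap T_e|\big|$: a non-crossing matching never sends flow across $e$ in both directions, so exactly the surplus on one side must cross $e$. Taking expectations over $X\sim P_k$ yields Equation~(\ref{eq:sum-of-edge-contr}). Since $|X\cap T_e|$ is binomial with parameters $k$ and $p_e$, each term equals $B(|S\cap T_e|,p_e)\cdot c(e)$. The cost of $S$ therefore depends on $S$ only through the counts $t_v=|S\cap T_v|$ over the nodes $v$, and it is the sum over all non-root $v$ of $B(t_v,p_v)\,c(\mathrm{parent}(v),v)$. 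We also check that the binarization in the footnote preserves this: the new zero-cost edges add nothing, and the distances between original nodes are unchanged.

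The main step is an induction, bottom-up, on the claim that for every node $u$ and every $t\in\{0,\dots,k\}$,
\[
V(u,t)=\min\Big\{\textstyle\sum_{v\in T_u\setminus\{u\}} B(t_v,p_v)\,c(\mathrm{parent}(v),v)\Big\},
\]
where the minimum ranges over multisets $S$ with exactly $t$ elements in $T_u$. The identity is the subtlety here. A feasible count vector is exactly one with $t_u=t$, $t_v\ge t_{\ell(v)}+t_{r(v)}$ at each internal $v$ (the slack is the number of copies placed at $v$ itself), and nonnegative integers at the leaves, so every such vector is realized by some $S$.

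For a leaf $u$ the sum is empty, so the value is $0$. For an internal $u$, the sum splits into the two child edges plus the sums inside $T_{\ell(u)}$ and $T_{r(u)}$. These pieces interact only through $t_\ell=t_{\ell(u)}$ and $t_r=t_{r(u)}$, with the constraint $t_\ell+t_r\le t$, and the remaining $t-t_\ell-t_r$ copies go at $u$. Fixing $(t_\ell,t_r)$ and minimizing each subtree independently uses the induction hypothesis, and this gives exactly the recurrence. At the root, $t=k$ forces $|S|=k$, so $V(h,k)$ is the optimum, and backtracking the argmins recovers an optimal $S$.

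The running time comes from three counts. The binarization adds $O(n)$ nodes. There are $O(nk)$ values $B(t,p_v)$, each costing $O(k)$, with binomial coefficients and powers precomputed. There are $O(nk)$ table entries, each a minimum over $O(k^2)$ pairs. The total is $O(nk^3)$.

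I expect the main obstacle to be the decomposition step. It relies on Claim~\ref{claim:no-edge-crossing} holding simultaneously for all edges with one matching, which the claim provides, and on the observation that "no crossing" forces the count on $e$ to equal the discrepancy rather than merely be at least it.
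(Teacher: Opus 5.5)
Your proposal is correct and follows the paper's own argument. The no-crossing claim turns the cost into the per-edge sum $\sum_e B(|S\cap T_e|,p_e)\,c(e)$, the bottom-up recurrence over the binarized tree minimizes this sum, and the $O(nk^2)$ precomputation plus $O(nk)$ table entries at $O(k^2)$ each give $O(nk^3)$.

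Your induction invariant over feasible count vectors makes rigorous what the paper leaves implicit. Your binarization check should also note that the DP may place robotaxis at the new dummy nodes; this is harmless because each dummy node is at distance $0$ from its original node, so such a placement can be mapped back without changing the cost.
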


\section{Experiments}\label{sec:exp}
We now describe our experiments. We first present our experimental setup; additional details can be found in Appendix~\ref{sec:exp-setup-details}. We have built a metric space and probability distributions from real data. The metric space has 263 points, corresponding to the centroids of the 263 taxizones of New York City, as defined by the NYC Taxi and Limousine Commission\footnote{\url{https://www.nyc.gov/site/tlc/}} (TLC). The distance between any pair of points is the average commute time between the two points in both directions, as returned by the Open Source Routing Machine\footnote{\url{https://project-osrm.org/}} (OSRM, \cite{luxen2011real}). This required almost 69K queries in their database via the OSRM API. 

In our experiments, we use four probability distributions over these locations, created by processing the 2023 High Volume For-Hire Vehicle Trip Dataset\footnote{\url{https://data.cityofnewyork.us/Transportation/2023-High-Volume-FHV-Trip-Data/u253-aew4/about_data}} provided by TLC. In particular, the four probability distributions A, B, C, and D were defined as the empirical distributions of the pickup locations of the following sets of trips during week 12 of 2023:
\begin{description}
\item[A:] 187,905 trips between 8:00 AM and 9:00 AM during the weekdays (March 20-24, 2023);
\item[B:] 171,623 trips between 5:00 PM and 6:00 PM during the weekdays;
\item[C:] 37,230 trips between 8:00 AM and 9:00 AM during the weekend (March 25-26, 2023);
\item[D:] 177,249 trips between 5:00 PM and 9:00 PM during Saturday, March 25, 2023.
\end{description}

In all our experiments\footnote{All algorithms, data processing pipelines, and experimental evaluations were implemented in Python 3 and executed in the cloud using standard CPU instances within the Google Colaboratory environment.}, we have used $k=1000$. We have implemented the following algorithms: 
\begin{description}
    \item[RP:] This is the random placement algorithm we study in Section 3.
    \item[VRRP:] This is a variant of RP, which first assigns $\lfloor P(x)\cdot k \rfloor$ robotaxis in each location $x$ of the metric space. The location of each of the remaining $k-\sum_{x}{\lfloor P(x) \cdot k \rfloor}$ robotaxis is selected independently and proportionally to the residual probability $P(x)-\lfloor P(x)\cdot k\rfloor/k$ at each location $x$.
    \item[UC$k$M:] This is an exact algorithm for the uniform capacitated $k$-median problem. The algorithm assigns a robotaxi for each probability mass of $1/k$ at each point.
\end{description}

For computing matchings between sets of $k$ locations, we used extensively the Python Optimal Transport library (POT) and, specifically, the extremely fast \texttt{ot.lp.emd} function~\cite{flamary2021pot}. The implementation of algorithms RP and VRRP is straightforward. 

The objective of the variation of uniform capacitated $k$-median we consider here (and of algorithm UC$k$M) is to compute a $k$-multiset of points in a metric space and a uniform distribution over them that has minimum Wasserstein distance from a given distribution over the points. In the robotaxi placement context, this is equivalent to the artificial problem of locating $k$ robotaxis to serve with minimum total ETA a continuous fractional demand distribution $P$ of total mass $1$, where each robotaxi is assigned a uniform capacity of $1/k$.

To compute the UC$k$M solution, we solve a mixed integer linear program with an integer variable $Q(x)$ for each location $x$ indicating the number of robotaxis to be put there (or, equivalently, the multiples of $1/k$ in $Q(x)$) and fractional variables $I(x,y)$ indicating the amount of probability that will be transferred from location $x$ to location $y$. Specifically, the UC$k$M MILP is as follows:
\begin{align*}
    \mbox{minimize} & \sum_{x\in X}\sum_{y\in X}{I(x,y)\cdot d(x,y)}\\
    \mbox{subject to:} & \sum_{y \in X}{I(x,y)} = Q(x)/k, \forall x\in X\\
    & \sum_{x \in X}{I(x,y)} = P(y), \forall y\in X\\
    & \sum_{x\in X}{Q(x)}=k\\
    & I(x,y) \geq 0, \forall x,y\in X\\
    & Q(x) \mbox{ is a non-negative integer}, \forall x\in X
\end{align*}
We solve the MILP using the COIN-OR Branch and Cut (CBC) solver, which is integrated in the PuLP Python library~\cite{forrest_cbc,mitchell2011pulp}. The placement for each of the four instances (with approximately 263 integral and 69,000 fractional variables) was computed by running the CBC solver for (up to) 200 seconds to reach optimality. No solution was obtained with considerably lower time limits (e.g., 60 seconds).

The results are depicted in Table~\ref{tab:data}. For the algorithms RP and VRRP, we made 100 independent runs of the algorithm and computed the average and minimum per-rider ETA. To estimate the per-rider ETA in each independent run, we consider 1000 sets of 1000 random (according to the corresponding probability distribution) riders each.

\begin{table}[h]
\centering
\begin{tabular}{c|ccc|ccc|c}\hline
 & \multicolumn{3}{c|}{RP} & \multicolumn{3}{c|}{VRRP} & UC$k$M \\
p.d. & avg & 95\% ci & min & avg & 95\% ci & min & \\\hline
A & $111.5$ & $\pm 1.0$ & $102.5$ & $86.3$ & $\pm 0.4$ & $83.5$ & $80.4$\\
B & $109.0$ & $\pm 1.4$ & $96.6$ & $82.9$ & $\pm 0.3$ & $80.9$ & $76.7$ \\
C & $112.7$ & $\pm 1.1$ & $103.3$ & $85.7$ & $\pm 0.4$ & $82.8$ & $79.4$ \\
D & $102.2$ & $\pm 1.1$ & $91.9$ & $79.5$ & $\pm 0.3$ & $77.2$ & $73.7$ \\\hline
\end{tabular}
\caption{Comparison of the three algorithms for the four probability distributions. Each number indicates the per-rider ETA in seconds.}
\label{tab:data}
\end{table}
In all cases, algorithm UC$k$M yields the best results with ETA per rider between $73.7$ (for p.d.~D) and $80.4$ (for p.d.~A). VRRP has approximately $8\%$ worse ETA for all distributions; such a small gap is rather remarkable given the simplicity of VRRP. The best execution of VRRP (among all independent runs) closes the gap with UC$k$M further. However, computing the best VRRP placement takes around 920 seconds, which is inferior to the 200 seconds for UC$k$M for the instance parameters we consider. Interestingly, algorithm RP yields significantly worse expected ETAs than both VRRP and UC$k$M. Thus, our Theorem~\ref{thm:rp-main} implies that both VRRP and UC$k$M obtain considerably better than $2$-approximations of the optimal per-rider ETA in practice.\footnote{Slightly worse (but still constant) theoretical guarantees for algorithms VRRP and UC$k$M are proved in Appendix~\ref{sec:vrrp+uckm}.}. The table includes the 95\% confidence intervals for the 100 independent runs of RP and VRRP. As VRRP is a variance-reduced variant of RP, its 95\% confidence intervals (between $\pm 0.3$ and $\pm 0.4$) computed across the 100 independent runs are consistently lower than those of RP (between $\pm 1.0$ and $\pm 1.4$).

\begin{figure}[htbp] 
    \centering
    \includegraphics[width=0.25\linewidth]{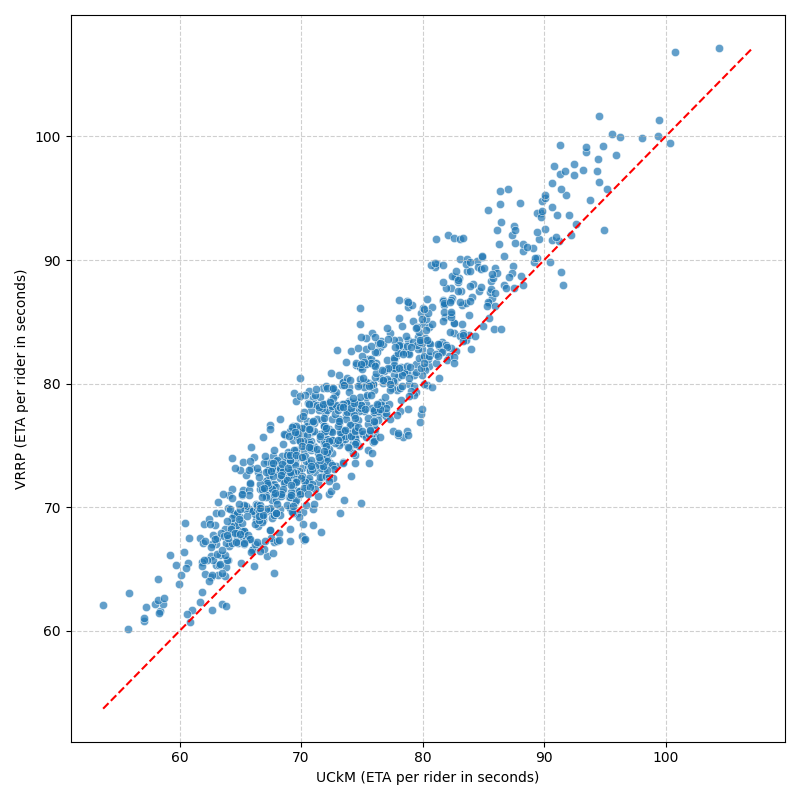}\hfill 
    \includegraphics[width=0.25\linewidth]{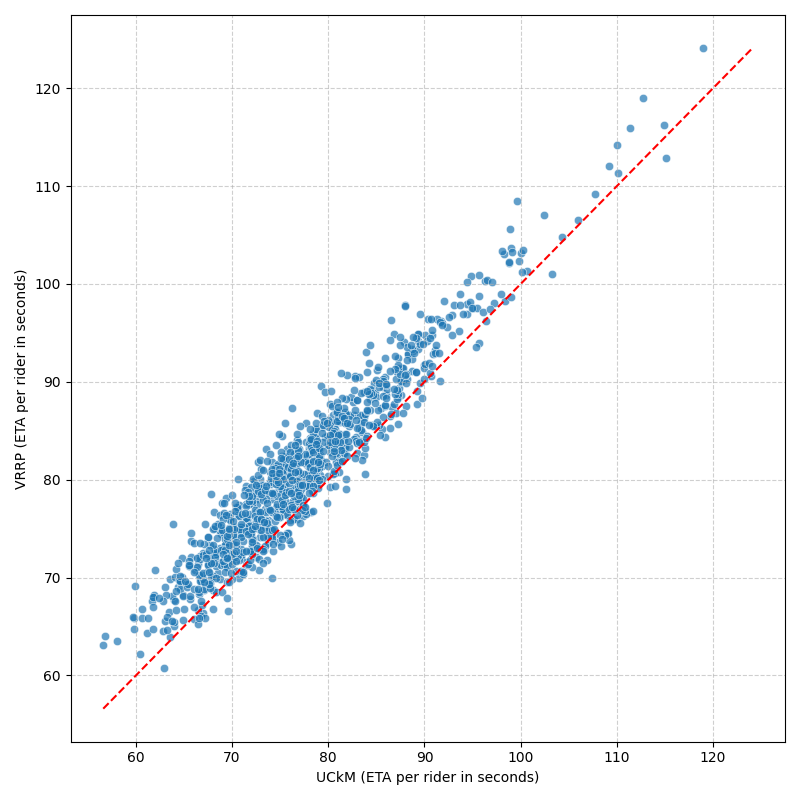}\hfill
    \includegraphics[width=0.25\linewidth]{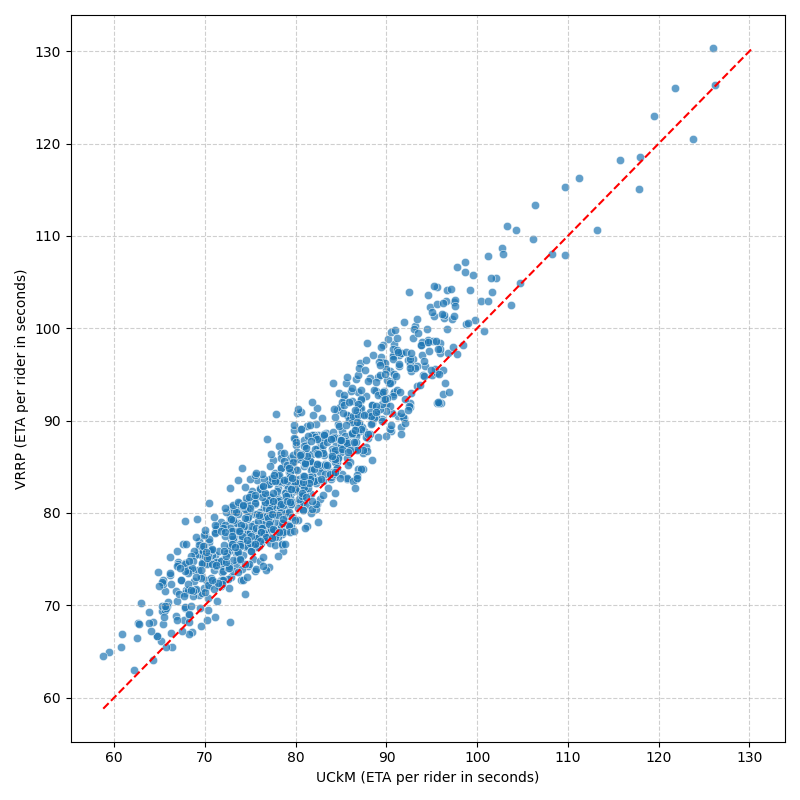}\hfill
    \includegraphics[width=0.25\linewidth]{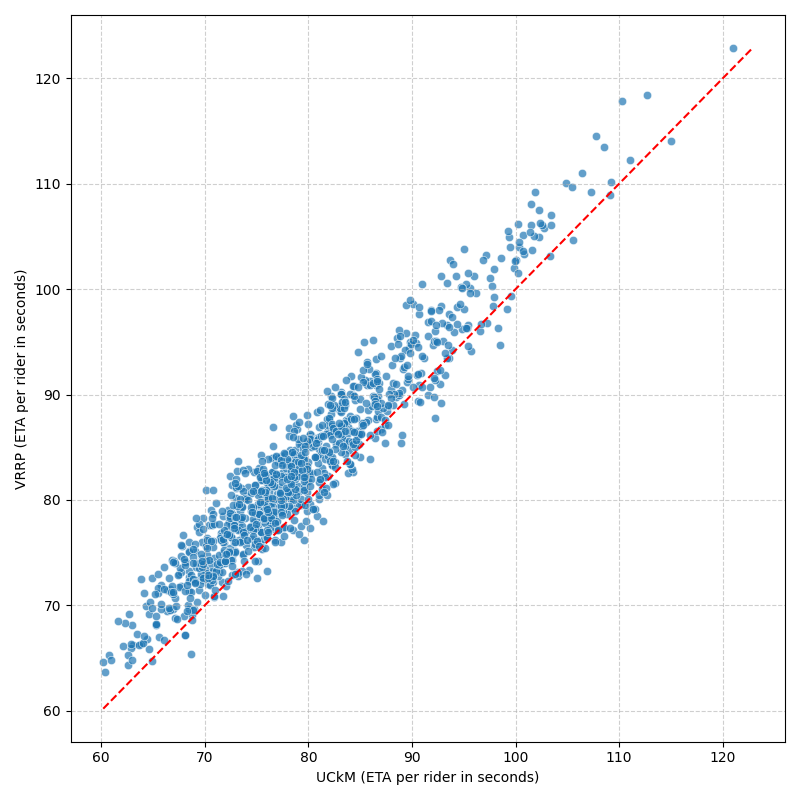}
    \caption{Comparison of algorithm UC$k$M (vertical axis) and the best run of algorithm VRRP (horizontal axis) across 1,000 random rider demand realizations for each of our four distributions A, B, C, and D.}
    \label{fig:four_plots}
\end{figure}

In the four plots of Figure~\ref{fig:four_plots}, we see a comparison between UC$k$M and the best run of VRRP on a per-realization basis. In each plot, each of the 1,000 depicted points corresponds to the execution of the two algorithms on a set of 1,000 random riders (according to the corresponding probability distribution A, B, C, and D). The 95\% confidence intervals for the results of the two algorithms on the 1,000 points are between $\pm 0.2$ and $\pm 0.3$.

\section{Discussion}
In this paper, we formalized and studied the robotaxi placement problem ($k$-RP). Our algorithmic results include a randomized $2$-approximation algorithm for general metrics and an exact polynomial-time dynamic programming algorithm for tree metrics. On the hardness side, we established an explicit APX-hardness bound of $1.0234$. Several compelling theoretical questions remain open. While we have not made a particular attempt to optimize the inapproximability constant, we suspect it can be improved using variations of our current technique. Overall, however, we expect the true approximability of $k$-RP to be closer to our upper bound of $2$. Additionally, exploring whether a polynomial-time approximation scheme exists for Euclidean or planar metrics---reflecting the 2D geometry of real road networks---is a highly promising direction. Moreover, a curious open question is whether  \emph{deterministic} constant-factor approximations exist for $k$-RP. In Appendix \ref{sec:vrrp+uckm}, we show that approximation algorithms for the uniform capacitated $k$-median problem yield similar approximation guarantees for $k$-RP. However, whether constant-approximation algorithms for UC$k$M exist is considered one of the most challenging open problems in the literature. An interesting extension of $k$-RP requires robotaxis to be placed in allowable locations only. In Appendix \ref{sec:rrp}, we provided a $3$-approximation for this restricted variant by extending our positive result for $k$-RP.

In our study, we briefly considered local search heuristics. Even though it was clear that their theoretical performance for $k$-RP is poor, they seemed ideal as alternative algorithms for our experiments. For example, local search could be used to improve the solutions returned by algorithms VRRP and UC$k$M. Unfortunately, the necessity to continuously evaluate neighboring placements by extensive sampling makes local search computationally impractical for the scale of instances we considered. 

Our experiments considered fleets of 1,000 robotaxis, a scale we expect to be typical during the initial years of robotaxi operations in large cities. For instances of this size, formulating the placement via the uniform capacitated $k$-median problem (UC$k$M) and solving it directly is practically effective and yields the best ETA. However, considering that the current number of for-hire vehicles in NYC exceeds 100,000, there is significant room for larger fleets in the future. Furthermore, more refined metrics will be necessary in practice for ETA predictions that are as accurate and, thus, as attractive to users as possible. As the scale of the problem increases, the computational overhead of UC$k$M will inevitably become prohibitive. In these large-scale regimes, our fast, variance-reduced VRRP algorithm will be a highly attractive and scalable alternative.

\newpage
\bibliographystyle{plain}
\bibliography{refs}

\newpage\appendix

\section{Tightness of Corollary~\ref{cor:2-apx-det}}\label{sec:2-apx-tightness}
We show that the result in Corollary~\ref{cor:2-apx-det} is tight with the next lemma.
\begin{lemma}
    There exists an instance of $k$-RP with $n$ points and $k=o\left(\sqrt{n}\right)$ such that any $k$-multiset of points that are returned by the distribution $P$ with positive probability satisfies $\E_{X\sim P_k}[d_k(S,X)] \geq (2-o(1))\cdot \E_{X\sim P_k}\left[d_k(O,X)\right]$, where $O$ is the optimal $k$-multiset.
\end{lemma}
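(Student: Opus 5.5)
Take the \emph{star} instance: a center point $c$ and $n-1$ leaves $x_1,\dots,x_{n-1}$, with $d(c,x_i)=1$ and $d(x_i,x_j)=2$ for $i\neq j$. This is a tree metric, so the triangle inequality holds. Let $P$ be uniform on the leaves, so $P(c)=0$ and the center lies outside the support.

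\textbf{Upper bound on the optimum.} Place all $k$ robotaxis at $c$, i.e., $O'=\{c,\dots,c\}$. Every rider sits at distance exactly $1$ from every robotaxi, so $d_k(O',X)=k$ for every realization $X$. Hence $\E_{X\sim P_k}[d_k(O,X)]\le k$.

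\textbf{Lower bound for support placements.} Let $S$ be any $k$-multiset of leaves, and fix a realization $X$. Under any perfect matching, a rider matched to a robotaxi at the same leaf costs $0$, and every other matched pair costs exactly $2$. Therefore
\[
d_k(S,X)=2\,(k-m(S,X)),
\]
where $m(S,X)=\sum_{x}\min\{\mathrm{mult}_S(x),\mathrm{mult}_X(x)\}$ is the multiset overlap. Matching coincident pairs first and the rest arbitrarily attains this value, and no matching can do better.

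It remains to bound $\E[m(S,X)]$. For every leaf $x$,
\[
\E[\min\{\mathrm{mult}_S(x),\mathrm{mult}_X(x)\}]\le \mathrm{mult}_S(x)\cdot \Pr[x\in X].
\]
Since $\Pr[x\in X]\le k/(n-1)$, summing over $x$ gives
\[
\E[m(S,X)]\le \frac{k^2}{n-1}.
\]
Consequently
\[
\E_{X\sim P_k}[d_k(S,X)]\ \ge\ 2k\left(1-\frac{k}{n-1}\right).
\]
Comparing with the optimum, the ratio is at least $2\bigl(1-k/(n-1)\bigr)=2-o(1)$ whenever $k=o(n)$. This holds in particular under the stated assumption $k=o(\sqrt n)$, which is stronger than what the argument needs.

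\textbf{Main obstacle.} The only delicate step is justifying the exact formula for $d_k(S,X)$ on the star with multisets, since a leaf may hold several robotaxis and several riders. The check is that any matching's cost equals $2$ times the number of non-coincident pairs. The number of coincident pairs is at most the overlap $m(S,X)$, and this bound is achieved. Both the formula and the optimality claim follow from this.
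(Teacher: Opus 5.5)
Your proof is correct. It uses the same star instance and the same bound $\E_{X\sim P_k}[d_k(O,X)]\le k$ (all robotaxis at the center) as the paper. The lower bound for support placements, however, is obtained differently.

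\textbf{The paper's route.} It discards every realization $X$ that intersects $S$ and uses only the fact that $d_k(S,X)=2k$ when $S$ and $X$ are disjoint. It then bounds $\Pr[S\cap X=\emptyset]\ge (1-k/(n-1))^k\ge 1-k^2/(n-1)$ by Bernoulli's inequality. This gives the ratio $2-2k^2/(n-1)$, which is where the hypothesis $k=o(\sqrt{n})$ comes from.

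\textbf{Your route.} You compute $d_k(S,X)=2(k-m(S,X))$ exactly for multisets on the star. You then bound the expected overlap by linearity of expectation, getting $\E[m(S,X)]\le k^2/(n-1)$. This loses only $2k/(n-1)$ in the ratio, so the $2-o(1)$ gap already holds for $k=o(n)$.

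\textbf{Comparison.} The paper's argument needs nothing beyond the trivial disjoint-case distance, but it gives up a factor of $k$ in the error term by ignoring partially overlapping realizations. Yours needs the exact overlap formula, which you justify correctly. In exchange it gives a quantitatively stronger statement and shows that the $\sqrt{n}$ threshold is an artifact of the paper's analysis, not of the instance.
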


\begin{proof}
Consider the instance of $k$-RP with a metric space consisting of a central point $C$ and $n-1$ non-central points at distance $1$ from $C$ and at distance $2$ from each other. The probability distribution $P$ returns each non-central point with probability $\frac{1}{n-1}$. The optimal solution is at least as good as the $k$-multiset in which all points are placed at the central point $C$, yielding an expected cost of $k$. Now, consider any $k$-multiset of non-central points $S$. We have
\begin{align*}
    \E_{X\sim P_k}\left[d_k(S,X)\right] &= \sum_{X\in \X_k}{P_k(X)\cdot d_k(S,X)} \geq \sum_{X\in \X_k:S\cap X=\emptyset}{P_k(X)\cdot d_k(S,X)}\\
    &= 2k\cdot \sum_{X\in \X_k:S\cap X=\emptyset}{P_k(X)} \geq  2k\cdot \left(1-\frac{k}{n-1}\right)^k\\
    &\geq \left(2-\frac{2k^2}{n-1}\right) \cdot \E_{X\sim P_k}\left[d_k(O,X)\right]. 
\end{align*}
The first inequality is obvious. The second equality is due to the fact that $d_k(S,X)=2k$ for any two disjoint $k$-multisets $S$ and $X$ of non-central points. The second inequality follows since a random point according to $P$ appears in $S$ with probability at most $\frac{k}{n-1}$ (since $S$ consists of at most $k$ different non-central points). The third inequality follows from the fact that $(1-x)^r\geq 1-x\cdot r$ for $x\in (0,1)$ and $r\geq 1$ and since $\E_{X\sim P_k}\left[d_k(O,X)\right]=k$. Clearly, the RHS of the above derivation is at least $(2-o(1))\cdot \E_{X\sim P_k}\left[d_k(O,X)\right]$ when $k=o\left(\sqrt{n}\right)$.
\end{proof}

\section{Proofs omitted from Section~\ref{sec:apx-hardness}}\label{sec:apx-lemmas}

\paragraph{Proof of Lemma~\ref{lem:only-set-points}.}
For the sake of contradiction, assume that the optimal solution $K$ of $k$-RP for instance $I_{\text{rp}}$ contains an element point $e$. Let $L$ be a random set of $k$ points, drawn independently according to the probability distribution $P$. Also, let $K'$ be the set of $k$ points obtained from $K$, after replacing the element point $e$ with an arbitrary set point $S$ in $\mathcal{S}$. We will show that the expected cost of the best matching between $K'$ and $L$ is strictly better than the cost of the best matching between $K$ and $L$, contradicting our initial assumption. 

We denote by $M$ the (random) matching between $K$ and $L$ of minimum cost. Also, we denote by $M'$ the (random) matching between $K'$ and $L$ defined as follows. $M'(S)=M(e)$ and $M'(p)=M(p)$ for every point $p\in K\setminus \{e\}$. Clearly,  for every point $p\in K\setminus\{e\}$, the edges $(p,M(p))$ and $(p,M'(p))$ contribute equally to the expected cost of matchings $M$ and $M'$, respectively. Now, observe that point $e$ does not belong to set $L$ with probability $(1-1/N)^k$. Hence, the contribution of edge $(e,M(e))$ to the cost of matching $M$ is $2\cdot (1-1/N)^k> 2\cdot (1-k/N)\geq 2-\varepsilon$, i.e., strictly worse than the contribution of edge $(S,M'(S))$ to the cost of matching $M'$. The inequality follows from Bernoulli's inequality stating $(1-z)^r > 1-z\cdot r$ for $z \in (0,1)$ and $r> 1$ (applied with $z=1/N$ and $r=k$). Overall, the expected cost of matching $M'$ is strictly lower than that of matching $M$, completing the proof.
\qed

\paragraph{Proof of Lemma~\ref{lem:full}.}
Consider the $k$-RP solution $K$ of instance $I_{\text{rp}}$ consisting of the $k$ set points that correspond to the $k$ sets defining a full covering of $U$ for instance $I_{\text{cov}}$. We use the following algorithm to compute a matching $M$ among the multiset $L$ of $k$ random element points drawn independently according to $P$ and the set of points of $K$. The algorithm considers the points in $L$ in arbitrary order. When considering point $p\in L$, it matches it with a set point of $K$ that is at distance $1$ and has not been matched with another point of $L$ before (if any). Clearly, such a match contributes $1$ to the cost of matching $M$. Let $x$ be the number of points in $L$ matched in this way. After considering all points in $L$, the algorithm matches the $k-x$ element points of $L$ that are still unmatched to the $k-x$ set points of $K$ that are still unmatched; each such match contributes $2-\varepsilon$ to the cost of matching $M$. Overall, the cost of $M$ is $x+(k-x)\cdot (2-\varepsilon)\leq 2k-x$. Since the probability that a random element point is at distance $1$ from a given set point (equivalently, the probability that a random element from $U$ belongs to a given set $S$) is $1/k$, we get that the expected number of set points in $K$ that have at least one element point in $L$ at distance $1$ and, consequently, the expectation of $x$, is $k\left(1-(1-1/k)^k\right)$. Hence, the algorithm will compute a matching of average cost
\begin{align*}
    \E[\text{cost}(M)] &\leq 2k-k\cdot \left(1-\left(1-\frac{1}{k}\right)^k\right) = k\cdot \left(1+\left(1-\frac{1}{k}\right)^k\right)\leq k \cdot \left(1+\frac{1}{e}\right),
\end{align*}
as desired.
\qed

\paragraph{Proof of Lemma~\ref{lem:partial}.}
Let $K$ be any $k$-RP solution of $I_{\text{rp}}$ consisting of $k$ set points (allowing repetitions). Denote by $U_0$, $U_1$, and $U_{\geq 2}$ the sets of element points in $U$ (equivalently, the elements of $U$) that are at distance $1$ from zero, one, or at least two set points in $K$ (equivalently, that belong to zero, one, and at least two sets from those corresponding to the set points of $K$), respectively. Now, consider a random multiset $L$ of $k$ element points drawn independently according to probability distribution $P$ and denote by $M$ a (random) matching between the points in multisets $K$ and $L$. Since $I_{\text{cov}}$ is a partial covering instance, we have $|U_0|\geq \left(\frac{1}{e}-\varepsilon\right)\cdot |U|$ and, by linearity of expectation,
\begin{align}\label{eq:constraint-U0}
    \E[|L\cap U_0|] \geq \left(\frac{1}{e}-\varepsilon\right)\cdot k.
\end{align}

Denote by $Y$ the number of set points in $K$ that have all points in $L\cap U_1$ at distance $2-\varepsilon$. For each set point $S\in K$, denote by $U_S$ the set of element points in $U_1$ that are at distance $1$ from $S$. Clearly, the sets $U_S$ for $S\in K$ are disjoint, i.e., $\sum_{S\in K}{|U_S|}=|U_1|$. Now, under the condition that the element point $p$, drawn according to probability distribution $P$, belongs to set $U_1$, the probability that it is at distance $1$ from a given point $S\in K$ is $|U_S|/|U_1|$. Hence, the probability that $S$ is at distance $2-\varepsilon$ from all the points in $L\cap U_1$ is $\left(1-|U_S|/|U_1|\right)^{|L\cap U_1|}$. Hence, the expectation of $Y$ is
\begin{align}\nonumber
\E[Y] &=\E\left[\sum_{S\in K}{\left(1-\frac{|U_S|}{|U_1|}\right)^{|L\cap U_1|}}\right] \geq \E\left[k \cdot \left(1-\frac{1}{k}\sum_{S\in K}{\frac{|U_S|}{|U_1|}}\right)^{|L\cap U_1|}\right]\\\nonumber
&=\E\left[k\cdot \left(1-\frac{1}{k}\right)^{|L\cap U_1|}\right]\geq k\cdot \left(1-\frac{1}{k}\right)^{\E[|L\cap U_1|]} \geq k\cdot \left(\frac{1}{e}-\varepsilon\right)^{\E[|L\cap U_1|]/k}\\\label{eq:exp-Y}
&\geq k\cdot e^{-\E[|L\cap U_1|]/k}\cdot (1-e\cdot \varepsilon).
\end{align}
The first inequality follows by the convexity of function $z^r$ for non-negative $z$ and integer $r\geq 1$, which implies that $\frac{1}{t}\cdot \sum_{i\in [t]}{z_i^r}\geq \left(\frac{1}{t}\cdot \sum_{i\in [t]}{z_i}\right)^r$ (applied for $t=k$, $r=|L\cap U_1|$ and $z_i=1-|U_S|/|U_1|$ for distinct $S\in K$). The second inequality follows by Jensen's inequality and the third one by Equation~(\ref{eq:epsilon-defn}). The fourth inequality follows by the concavity of function $z^r$ for $r\in [0,1]$ which implies that, for $z_2>z_1>0$, the slope of the line connecting points $(0,0)$ and $(z_1,z_1^r)$ is higher than the slope of the line connecting points $(z_1,z_1^r)$ and $(z_2,z_2^r)$, i.e., $\frac{z_1^r}{z_1}\geq \frac{z_2^r-z_1^r}{z_2-z_1}$ and, therefore, $z_1^r\geq z_2^r\cdot \frac{z_1}{z_2}$. The inequality here follows using $z_1=\frac{1}{e}-\varepsilon$, $z_2=\frac{1}{e}$, and $r=\frac{E[|L\cap U_1|]}{k}$.

Now, let $T$ be the number of points in $L$ that contribute $2-\varepsilon$ to the cost of matching $M$. Clearly, all points in $L\cap U_0$ have this property. Furthermore, since $Y$ of the set points in $K$ are at distance $2-\varepsilon$ from all points in $L\cap U_1$, we have that at least $|L\cap U_1| - k+Y$ points from $L\cap U_1$ contribute $2-\varepsilon$ to the cost of matching $M$, too. We obtain that
\begin{align}\nonumber
    &\E\left[\text{cost}(M)\right]\\\nonumber
    &= \E\left[T\cdot \left(2-\varepsilon\right)+k-T\right] \geq \E\left[(1-\varepsilon)\cdot \left(|L\cap U_0|+|L\cap U_1|-k+Y\right)+k\right]\\\nonumber
    &\geq (1-\varepsilon)\cdot \E[|L\cap U_0|]+(1-\varepsilon)\cdot \E[|L\cap U_1|]+\varepsilon\cdot k+(1-\varepsilon)\cdot k\cdot e^{-\E[|L\cap U_1|]/k}\cdot (1-e\cdot \varepsilon)\\\label{eq:bound}
    &\geq k \cdot \left((1-\varepsilon)\cdot \frac{\E[|L\cap U_0|]}{k}+(1-\varepsilon)\cdot \frac{\E[|L\cap U_1|]}{k}+\varepsilon+e^{-\E[|L\cap U_1|]/k}\cdot (1-(e+1)\cdot \varepsilon)\right).
\end{align}
The second inequality follows by Equation~(\ref{eq:exp-Y}). The last inequality follows by the fact $(1-\alpha)\cdot (1-\beta)\geq 1-\alpha-\beta$ for $\alpha,\beta\in [0,1]$. We will use this inequality again below.

By the definitions of sets $U_0$, $U_1$, and $U_{\geq 2}$, we have $|U_{\geq 2}|=|U|-|U_0|-|U_1|$ and $|U|\geq |U_1|+2\cdot |U_{\geq 2}|$. Combining them, we get
\begin{align*}
    |U|\geq |U_1|+2\cdot |U_{\geq 2}| = |U_1|+2\cdot (|U|-|U_0|-|U_1|)=2\cdot |U|-2\cdot |U_0|-|U_1|
\end{align*}
and, equivalently, $|U_1|\geq |U|-2\cdot |U_0|$. Thus, by linearity of expectation,
\begin{align}\label{eq:constraint-U1}
    \E[|L\cap U_1|] \geq k-2\cdot \E[|L\cap U_0|].
\end{align}
In the rest of the proof of Lemma~\ref{lem:partial}, our goal is to lower-bound the parenthesis in the RHS of Equation~(\ref{eq:bound}), subject to the constraints (\ref{eq:constraint-U0}) and (\ref{eq:constraint-U1}). By setting $x=\E[|L\cap U_0|]/k$ and $y=\E[|L\cap U_1|]/k$, we will do so by lower-bounding the objective of the following mathematical program:
\begin{align*}
    \mbox{minimize}\quad\quad & x\cdot (1-\varepsilon)+y\cdot (1-\varepsilon)+\varepsilon +e^{-y}(1-(e+1)\cdot \varepsilon)\\
    \mbox{subject to:}\quad\quad & y \geq \max\{0,1-2x\}\\
    & x\geq \frac{1}{e}-\varepsilon
\end{align*}
Observe that the objective is non-decreasing in $y$. Indeed, since $y\geq 0$, its derivative is $1-\varepsilon-e^{-y}(1-(e+1)\cdot \varepsilon)\geq e\cdot \varepsilon>0$. Hence, if $x\geq 1/2$, the objective function is minimized for $y=0$ to at least $\frac{3}{2}-\left(e+\frac{1}{2}\right)\cdot \varepsilon > 1-\frac{1}{e}+e^{-1+2/e}-3\, \left(1+\frac{1}{e}\right)\cdot \varepsilon$ and the RHS of Equation~(\ref{eq:bound}) is at least $k \cdot \left(1-\frac{1}{e}+e^{-1+2/e}-3\left(1+\frac{1}{e}\right)\cdot \varepsilon\right)$, as desired. 
Otherwise, the objective function is minimized for $y=1-2x$ to $1-x\cdot (1-\varepsilon)+e^{-1+2x}(1-(e+1)\cdot \varepsilon)$. It remains to lower-bound the objective of the following mathematical program:
\begin{align*}
    \mbox{minimize}\quad\quad & 1-x\cdot (1-\varepsilon)+e^{-1+2x}(1-(e+1)\cdot \varepsilon)\\
    \mbox{subject to:}\quad\quad & x\geq \frac{1}{e}-\varepsilon
\end{align*}
The derivative of the new objective with respect to $x$ is clearly increasing. Hence, for $x\geq \frac{1}{e}-\varepsilon$, it evaluates to
\begin{align*}
-1+\varepsilon +2e^{-1+2x}(1-(e+1)\cdot \varepsilon) &\geq -1+\varepsilon+2e^{-1+2/e-2\varepsilon}(1-(e+1)\cdot \varepsilon)\\
&\geq -1+\varepsilon+2e^{-1+2/e}\cdot (1-2\varepsilon)\cdot (1-(e+1)\cdot \varepsilon)\\
&\geq -1+\varepsilon+2e^{-1+2/e}\cdot (1-(e+3)\cdot \varepsilon)\\
&> 0.
\end{align*}
The second inequality follows from the property $e^z\geq 1+z$ (applied for $z=-2\varepsilon$). The third one follows from the property we mentioned earlier, and the fourth one is due to Equation~(\ref{eq:epsilon-small}). Hence, the objective of the second mathematical program is increasing for $x\geq 1/e-\varepsilon$ and is minimized to
\begin{align*}
    &1-\left(\frac{1}{e}-\varepsilon\right)\cdot (1-\varepsilon)+e^{-1+2/e-2\varepsilon}\cdot (1-(e+1)\cdot \varepsilon)\\
    &\geq 1-\frac{1}{e}+\varepsilon+e^{-1+2/e}\cdot (1-2\varepsilon)\cdot (1-(e+1)\cdot \varepsilon) \\
    &\geq 1-\frac{1}{e}+e^{-1+2/e}-\left((e+3)\cdot e^{-1+2/e}-1\right)\varepsilon\\
    &\geq 1-\frac{1}{e}+e^{-1+2/e}-3\left(1+\frac{1}{e}\right)\cdot \varepsilon.
\end{align*}
Hence, the RHS of Equation (\ref{eq:bound}) is again at least $k \cdot \left(1-\frac{1}{e}+e^{-1+2/e}-3\left(1+\frac{1}{e}\right)\cdot \varepsilon\right)$, as desired.
\qed

\section{Proof of Claim~\ref{claim:no-edge-crossing}}\label{sec:claim:no-edge-crossing}
    Let $M$ be a minimum-cost matching that crosses an edge $e=(u,v)$, i.e., it contains two nodes $x_1\in X$ and $s_1\in S$ that belong together with node $u$ to one of the subtrees defined by the removal of edge $e$, two nodes $x_2\in X$ and $s_2\in S$ that belong to the other subtree together with node $v$, so that $x_1$ is matched to $s_2$ and $x_2$ is matched to $s_1$ in $M$. We will show that, by replacing the pairs $(x_1,s_2)$ and $(x_2,s_1)$ by the pairs $(x_1,s_1)$ and $(x_2,s_2)$ in $M$, its cost does not increase. Indeed, we have
    \begin{align*}
        d(x_1,s_2)+d(x_2,s_1) &= d(x_1,u)+d(u,v)+d(v,s_2)+d(x_2,v)+d(v,u)+d(u,s_1)\\
        &\geq d(x_1,u)+d(v,s_2)+d(x_2,v)+d(u,s_1)\\
        &\geq d(x_1,s_1)+d(x_2,s_2).
    \end{align*}
    The claim follows by repeating the process for every edge crossing.
\qed

\section{Experimental setup details}\label{sec:exp-setup-details}
We now present additional details for our experimental setup. 
The next map\footnote{The geographic maps presented in this section have been generated using Python and the Folium library, with underlying map tiles and spatial data provided by OpenStreetMap~\cite{haklay2008openstreetmap} contributors.} (Figure \ref{fig:metric-map}) has a circle at the centroid of each NYC taxizone. These are essentially the points in the metric space we have used in our experiments.

\begin{figure}[htbp] 
    \centering
    \includegraphics[width=0.5\linewidth]{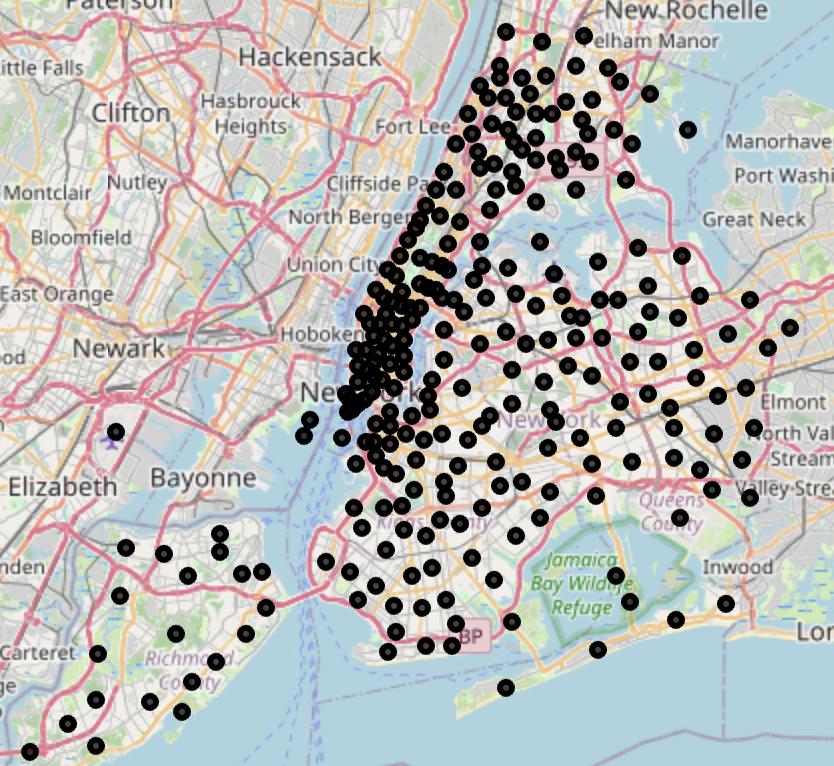}
    \caption{The 263 points of our metric space corresponding to the centroids of the 263 NYC taxizones. }
    \label{fig:metric-map}
\end{figure}

The spatial density of the 263 taxizone centroids is highly non-uniform across the city. The points are densely clustered in Manhattan and the adjacent inner neighborhoods of Brooklyn and Queens, reflecting the smaller geographic footprint of taxizones in the high-traffic urban core. Conversely, the density of points decreases significantly in the outer boroughs, such as Staten Island and eastern Queens, where individual taxizones cover much larger geographical areas.

The four distributions are depicted pictorially in Figure~\ref{fig:pd}. For each taxizone, the probability that a random rider has a pickup location therein is denoted by a circle of radius proportional to the probability and centered at the taxizone centroid. Darker shades and higher radii indicate higher probability. 

\begin{figure}[htbp] 
    \centering
    \includegraphics[width=0.49\linewidth]{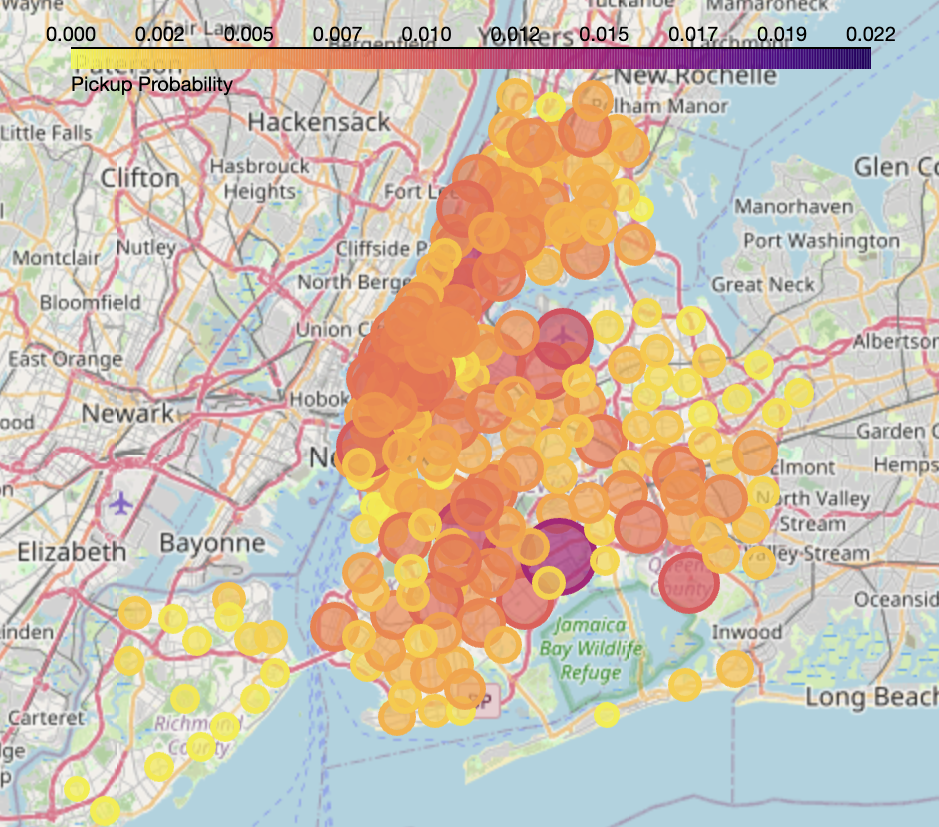}\hfill 
    \includegraphics[width=0.49\linewidth]{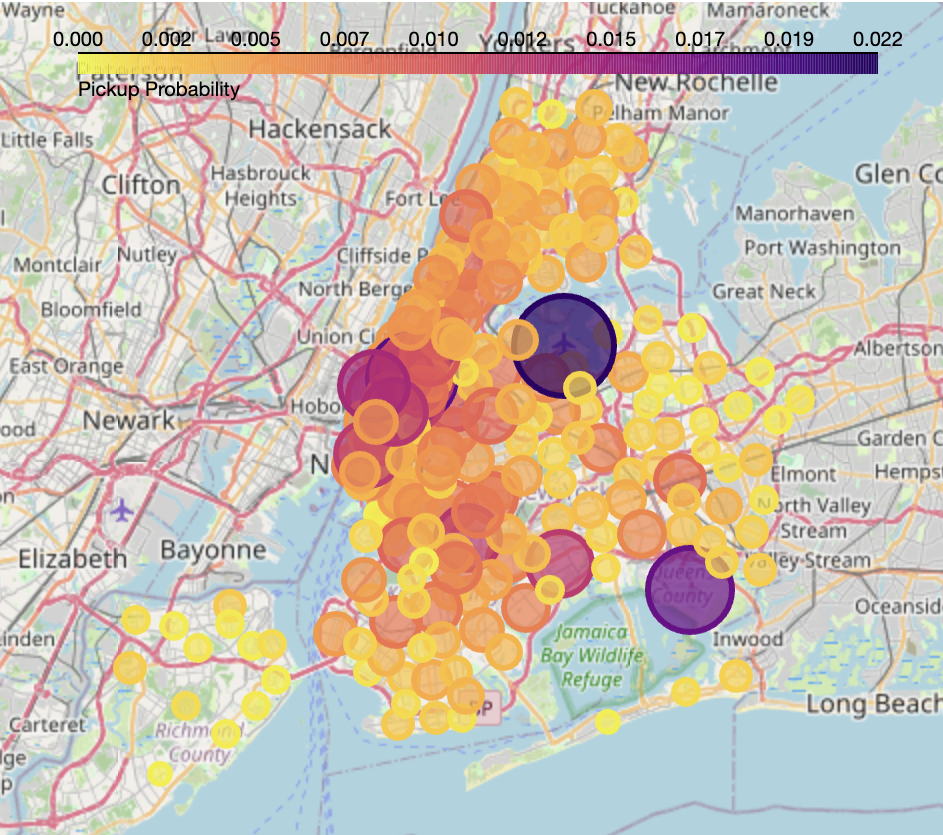}\hfill
    \includegraphics[width=0.49\linewidth]{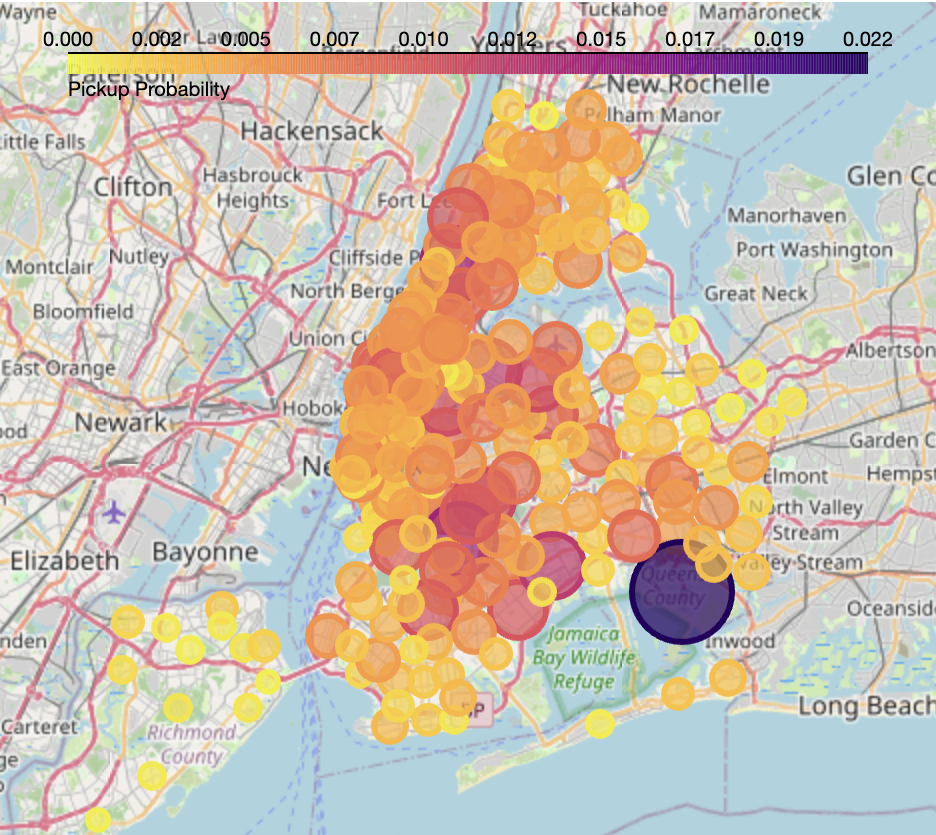}\hfill
    \includegraphics[width=0.49\linewidth]{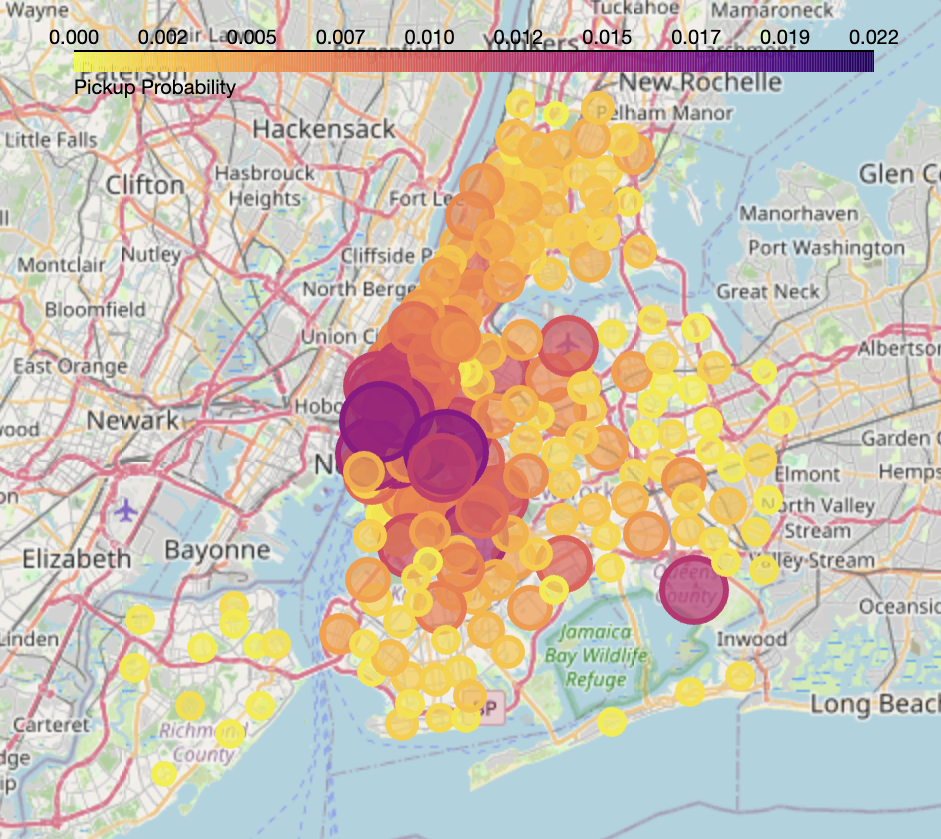}
    \caption{Visual representation of the four probability distributions over the riders' pickup locations used in our experiments: p.d.~A with pickups during Weekdays 8:00-9:00 AM of week 12 of 2023 (upper-left), p.d.~B, Weekdays 5:00-6:00 PM (upper-right), p.d.~C, Weekend 8:00-9:00 AM (lower-left), p.d.~D, Saturday 5:00-9:00 PM (lower-right).}
    \label{fig:pd}
\end{figure}

P.d.~A (upper-left) is based on pickups during the morning commuter rush of the weekdays (8:00 AM – 9:00 AM) and is characterized by a broad spread of pickups across residential areas in the outer boroughs and Upper Manhattan. 
P.d.~B (top-right) is based on pickups during the evening commuter rush of the weekdays (5:00 PM – 6:00 PM). In stark contrast to the morning commute, the weekday evening rush features intensely concentrated demand in Midtown and Lower Manhattan as workers request rides from corporate hubs to return home. Noticeable hotspots also emerge at JFK and LGA airports. P.d.~C (bottom-left) is based on the weekend morning traffic (Weekend 8:00 AM - 9:00 AM). The distribution is noticeably dispersed across the boroughs, lacking the pronounced residential or corporate focal points of the weekdays. Because overall local demand is much lower and decentralized, JFK Airport stands out as the most prominent hotspot. P.d.~D (bottom-right) is based on Saturday evening (5:00 PM - 9:00 PM) demand, showing a renewed concentration of hotspots targeting dining, social, and entertainment districts in Manhattan and Brooklyn. Alongside this nightlife activity, JFK and LGA still maintain a strong, highly visible presence, capturing the influx of Saturday afternoon/evening flights. We remark that even though Newark airport corresponds to taxizone 1, it has probability $0$ in all four distributions. This is due to the fact that the pickup locations in the datasets we used are within the limits of the city of New York. 

\section{Analysis of algorithms VRRP and UC$k$M}\label{sec:vrrp+uckm}
We now analyze the algorithm VRRP that we used in our experiments. The main difference with the random placement algorithm is that it selects the $k$ points in the solution according to the variance-reduced probability distribution $\widetilde{P}_k$ of the original distribution $P_k$ over sets of $k$ points defined as follows. For each point $x\in \X$, the distribution $\widetilde{P}_k$ returns $\lfloor P(x)\cdot k\rfloor$ points. Each of the remaining $k-\sum_{x\in X}{\lfloor P(x)\cdot k}\rfloor$ points in the solution (if any) is selected independently and proportionally to the residual probabilities $P(x)-\lfloor P(x)\cdot k\rfloor /k$. I.e., point $x$ is selected with probability $\frac{P(x)-\lfloor P(x)\cdot k\rfloor /k}{\sum_{y\in \X}{\left(P(y)-\lfloor P(y)\cdot k\rfloor /k\right)}}$.

For a set $X$ of $k$ points, define the point-mass distribution $D_X$ that assigns value $D_X(x)$ to location $x\in \X$ equal to its multiplicity in $X$ over $k$. Then, the minimum-cost matching distance between two sets $X$ and $Y$ of $k$ points each is equal to $k$ times the Wasserstein distance of their point-mass distributions, i.e., $d_k(X,Y)=k\cdot d_W(D_X,D_Y)$. It will be convenient to work with the Wasserstein distance since we can use the triangle inequality and related distances between sets of points to distances between (empirical) distributions and to distances between a set of points and a distribution.

\begin{theorem}
Consider a $k$-RP instance and let $O$ be the optimal $k$-multiset and $S$ the random $k$-multiset returned by algorithm VRRP. It holds that $\E_{S\sim \widetilde{P}_k}\E_{X\sim P_k}\left[d_k(S,X)\right] \leq 4\cdot \E_{X\sim P_k}\left[d_k(O,X)\right]$.
\end{theorem}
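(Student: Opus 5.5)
The plan is to route everything through the true distribution $P$ with the Wasserstein reformulation $d_k(X,Y)=k\cdot d_W(D_X,D_Y)$. Extend $d_W$ to arbitrary distributions, so the triangle inequality holds among $D_S$, $D_X$, $D_O$ and $P$ itself. The key intermediate quantity is $\Phi(Q)=\E_{X\sim P_k}[k\cdot d_W(Q,D_X)]$ for a distribution $Q$. First I would show that $P$ is nearly optimal among all centers: by convexity of $d_W(\cdot,\cdot)$ in each argument and $\E_{X\sim P_k}[D_X]=P$, we get $k\cdot d_W(D_O,P)\le \E_X[k\cdot d_W(D_O,D_X)]=\mathrm{OPT}$. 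Here I take $\mathrm{OPT}=\E_{X\sim P_k}[d_k(O,X)]$.

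Next I would bound the expected distance from a sample to $P$. By the triangle inequality, $k\, d_W(P,D_X)\le k\, d_W(P,D_O)+d_k(O,X)$, and taking expectations gives $\E_X[k\, d_W(P,D_X)]\le 2\,\mathrm{OPT}$. For VRRP, the triangle inequality gives
\[
\E_S\E_X[d_k(S,X)]\le \E_S[k\, d_W(D_S,P)]+\E_X[k\, d_W(P,D_X)].
\]
The second term is at most $2\,\mathrm{OPT}$. So it suffices to show $\E_{S\sim\widetilde P_k}[k\, d_W(D_S,P)]\le 2\,\mathrm{OPT}$.

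For this, write $P=\frac{1}{k}F+\frac{r}{k}R$. Here $F$ is the deterministic part with $\lfloor P(x)k\rfloor$ at each $x$, $r$ is the number of residual points, and $R$ is the normalized residual distribution. Then $D_S=\frac1k F+\frac{r}{k}D_{Z}$, where $Z$ consists of $r$ i.i.d.\ samples from $R$. Since the common part $F$ cancels in the transport problem, $k\,d_W(D_S,P)\le r\cdot d_W(D_Z,R)$. This is the expected distance of an $r$-sample of $R$ from $R$. I would compare it with $P$: I would argue that a coupling of $P$ with $D_X$ can be restricted so that variance reduction only helps. Concretely, I would aim for the inequality $\E_S[k\, d_W(D_S,P)]\le \E_X[k\, d_W(D_X,P)]$ via a sampling coupling. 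Generate $X$ as follows: for each $x$, draw a multinomial count; then $X$ consists of "deterministic" copies plus excess. Alternatively, use a convexity/Jensen argument conditioned on the residual structure.

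The main obstacle is precisely this last comparison, namely showing that the variance-reduced sample is no farther from $P$ in expectation than an i.i.d.\ sample. If a direct coupling fails, I would fall back to bounding $r\,\E[d_W(D_Z,R)]$ against OPT via the $k$-RP instance restricted to the residual. The intended route is to note that $r\,d_W(D_Z,R)$ is exactly the quantity controlled by Theorem~\ref{thm:rp-main}'s argument applied to an $r$-RP instance with distribution $R$. Then I would relate that instance's optimum to $\mathrm{OPT}$ by subtracting $F$ (moving mass $F$ from $O$'s coupling). Combining the three bounds then yields the factor $2+2=4$.
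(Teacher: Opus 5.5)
Your reductions follow the paper's proof, and they are all correct: the triangle inequality through $P$; Jensen giving $k\,d_W(D_O,P)\le \E_{X\sim P_k}[d_k(O,X)]=\mathrm{OPT}$, hence $\E_X[k\,d_W(P,D_X)]\le 2\,\mathrm{OPT}$; and the cancellation $k\,d_W(D_S,P)=r\,d_W(D_Z,R)$. What remains is $\E_{S\sim\widetilde P_k}[k\,d_W(D_S,P)]\le 2\,\mathrm{OPT}$. The paper gets it from $\E_{S\sim\widetilde P_k}[d_W(D_S,P)]\le\E_{S\sim P_k}[d_W(D_S,P)]$ plus the RP bound. \textbf{You leave exactly this comparison unproved, so the proposal is incomplete at its crux.}

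Your primary route, a coupling in which ``variance reduction only helps'' by convexity and Jensen, cannot work. It would need a coupling of the VRRP count vector $N'$ with $N\sim\mathrm{Mult}(k,P)$ such that $\E[N\mid N']=N'$. Take $k=2$ and $P=(\frac12,\frac14,\frac14)$ on points $a,b,c$. VRRP returns counts $(N'_a,N'_b,N'_c)\in\{(1,1,0),(1,0,1)\}$. Such a coupling would force $N_c=0$ on the first event and $N_b=0$ on the second, yet RP produces $(0,1,1)$ with probability $\frac18$. Concretely, the convex function $g(N)=|N_b-N_c|$ has $\E g(N')=1>\frac34=\E g(N)$. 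So a proof must use the specific structure of $d_W$: its dual unit ball is a Lipschitz polytope, which encodes the triangle inequality of $d$. Convexity alone is not enough, and the paper's one-line appeal to convexity for this step is open to the same objection.

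For tree metrics the comparison does hold. There $k\,d_W(D,P)=\sum_e c(e)\,|kD(T_e)-kp_e|$. Under VRRP, $kD_S(T_e)$ is a sum of $k$ independent Bernoulli variables: $\sum_{x\in T_e}\lfloor kP(x)\rfloor$ of them equal to $1$, $r$ with parameter $R(T_e)$, and the rest $0$, with mean $kp_e$. Hoeffding's theorem says $\mathrm{Bin}(k,p_e)$ dominates such sums for convex functions, which gives the inequality edge by edge. For general metrics you give no argument.

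Your fallback only relocates the difficulty. The bound $\E[r\,d_W(D_Z,R)]\le 2\,\mathrm{OPT}_r(R)$, where $\mathrm{OPT}_r(R)$ is the optimum of the $r$-RP instance with distribution $R$, is fine by the argument of Theorem~\ref{thm:rp-main}. You then need $\mathrm{OPT}_r(R)\le\mathrm{OPT}$, and ``subtracting $F$ from $O$'' does not supply it:
\begin{itemize}
\item Nothing you say ensures $O\supseteq F$ as a multiset. Removing the part of $O$'s fractional coupling to $P$ that serves $F$ does not yield an $r$-multiset.
\item Even if $O\supseteq F$, you would only get $\mathrm{OPT}_r(R)\le\E_Z[d_r(O\setminus F,Z)]=\E_Z[d_k(O,F\uplus Z)]$. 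You would still need $\E_Z[d_k(O,F\uplus Z)]\le\E_{X\sim P_k}[d_k(O,X)]$, i.e.\ that riders drawn in the stratified VRRP fashion are cheaper to serve from $O$ than i.i.d.\ riders.
\end{itemize}
That is the same stratified-versus-i.i.d.\ comparison as before, now measured against $O$ instead of $P$, and again it does not follow from convexity alone.
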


\begin{proof}
We need to show that $\E_{S\sim \widetilde{P}_k}\E_{X\sim P_k}[d_k(S,X)] \leq
4\cdot \E_{X\sim P_k}[d_k(O,X)]$. Notice that both algorithms RP and VRRP produce $k$-multisets according to the same empirical distribution $P$ (in the sense that the expected multiplicity of point $x\in \X$ in the set of points returned by each of them is $P(x)\cdot k$). Specifically, VRRP decomposes $P$ into a deterministic component $\lfloor P(x) \cdot k \rfloor / k$ and a fractional residual. By deterministically satisfying the integer component, VRRP incurs zero transportation cost for this mass, restricting sampling error solely to the residual. In contrast, RP draws all $k$ points i.i.d., subjecting the entire distribution $P$ to sampling error. By the convexity of the Wasserstein metric, resolving a portion of the measure deterministically guarantees that the expected distance under VRRP is upper-bounded by that of the unstratified sampling of RP. Therefore, the expected Wasserstein distances of the sets they produce to the empirical distribution $P$ satisfy
\begin{align}\label{eq:rp-vs-vrrp}
    \E_{S\sim \widetilde{P}_k}[d_W(D_S,P)] &\leq \E_{S\sim P_k}[d_W(D_S,P)].
\end{align}
Now, we have that 
\begin{align*}
\E_{S\sim \widetilde{P}_k}\E_{X\sim P_k}\left[d_k(S,X)\right] &=k\cdot \E_{S\sim \widetilde{P}_k}\E_{X\sim P_k}\left[d_W(D_S,D_X)\right]\\ 
&\leq k\cdot \left(\E_{S\sim \widetilde{P}_k}\E_{X\sim P_k}\left[d_W(D_S,P)+d_W(P,D_X)\right]\right)\\
&\leq k\cdot \left(\E_{S\sim P_k}[d_W(D_S,P)]+\E_{X\sim P_k}[d_W(P,D_X)]\right)\\
&=2k\cdot \E_{X\sim P_k}[d_W(P,D_X)]\\
&\leq 2k\cdot \left(d_W(P,D_O)+\E_{X\sim P_k}[d_W(D_O,D_X)]\right)\\
&\leq 4k\cdot \E_{X\sim P_k}[d_W(D_O,D_X)]\\
&=4\cdot \E_{X\sim P_k}[d_k(O,X)],
\end{align*}
as desired. The first and last equalities follow from the relation between distance functions $d_k$ and $d_W$. The first inequality follows from the triangle inequality for the Wasserstein distance, and the second one is due to Equation~(\ref{eq:rp-vs-vrrp}). The second equality follows trivially as both $X$ and $S$ follow the same probability distribution $P_k$. The third inequality follows by triangle inequality for the Wasserstein distance, and the fourth one by Jensen inequality applied to the convex function $d_W(\cdot,D_O)$, which implies that $d_W(P,D_O)\leq \E_{X\sim P_k}[d_W(D_X,D_O)]$ since $P=\E_{X\sim P_k}[D_X]$.
\end{proof}

\begin{corollary}\label{cor:3-apx-det-VRRP}
    Consider a $k$-RP instance and let $O$ be the optimal $k$-multiset. There exists a $k$-multiset $S$ consisting of points that are drawn with positive probability from the variance-reduced probability distribution $\widetilde{P}_k$ so that $\E_{X\sim P_k}\left[d_k(S,X)\right] \leq 4\cdot \E_{X\sim P_k}\left[d_k(O,X)\right]$.
\end{corollary}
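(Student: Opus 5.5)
This follows from the preceding theorem by the probabilistic method, in the same way Corollary~\ref{cor:2-apx-det} follows from Theorem~\ref{thm:rp-main}. Define $f(S)=\E_{X\sim P_k}[d_k(S,X)]$ for every $k$-multiset $S$. By the preceding theorem,
\[
\E_{S\sim\widetilde{P}_k}[f(S)] = \sum_{S\in\X_k}\widetilde{P}_k(S)\cdot f(S) \leq 4\cdot f(O).
\]
The sum has finitely many terms, since $\X_k$ is finite. The weights $\widetilde{P}_k(S)$ are nonnegative and sum to one. A weighted average of finitely many reals cannot be below all of its terms with positive weight. Hence some $S$ with $\widetilde{P}_k(S)>0$ satisfies $f(S)\leq 4\cdot f(O)$.

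It remains to confirm that this $S$ has the form the statement requires. It must consist of points drawn with positive probability under the variance-reduced distribution, and it must be a genuine $k$-multiset. Both hold by construction of $\widetilde{P}_k$. The deterministic part places $\lfloor P(x)\cdot k\rfloor$ copies of each $x$, which is only possible when $P(x)>0$. The residual draws put positive mass only on points with positive residual $P(x)-\lfloor P(x)\cdot k\rfloor/k$, which again requires $P(x)>0$. The total count is always exactly $k$.

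If no residual draws are needed, the deterministic part already has $k$ points. Then $\widetilde{P}_k$ is a point mass, the average is a single term, and the claim is immediate.

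There is no real obstacle here. The only care needed is to make the averaging step precise, which means noting that the support of $\widetilde{P}_k$ is finite and nonempty.
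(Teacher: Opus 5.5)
Your proposal is correct and takes the same approach as the paper. The paper states this corollary without a separate proof, as the averaging (probabilistic-method) consequence of the preceding VRRP theorem, just as Corollary~\ref{cor:2-apx-det} follows from Theorem~\ref{thm:rp-main}; your extra checks on the support of $\widetilde{P}_k$ (sampled points have $P(x)>0$, exactly $k$ points, and the point-mass case) are sound and just make the averaging step explicit.
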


We are now ready to analyze algorithm UC$k$M. Notice that the algorithm computes a $\rho$-approximate $k$-multiset under the objective $\min_{T\in \X_k}{d_W(D_T,P)}$. Our result for its performance for $k$-RP is as follows.

\begin{theorem}\label{thm:uckm}
Consider a $k$-RP instance and let $O$ be the optimal $k$-multiset and $S$ the $k$-multiset returned by algorithm UC$k$M. It holds that $\E_{X\sim P_k}\left[d_k(S,X)\right] \leq (\rho+2)\cdot \E_{X\sim P_k}\left[d_k(O,X)\right]$.
\end{theorem}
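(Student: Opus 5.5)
The plan is to work with the Wasserstein reformulation $d_k(X,Y)=k\cdot d_W(D_X,D_Y)$ and apply the triangle inequality twice, mirroring the proof of the VRRP bound. Here $S$ is deterministic and, by assumption, $\rho$-approximate for $\min_{T\in\X_k} d_W(D_T,P)$. First, for every realization $X\sim P_k$,
\begin{align*}
d_k(S,X)=k\cdot d_W(D_S,D_X)\leq k\cdot d_W(D_S,P)+k\cdot d_W(P,D_X).
\end{align*}
Taking expectations gives
\begin{align*}
\E_{X\sim P_k}[d_k(S,X)]\leq k\cdot d_W(D_S,P)+k\cdot\E_{X\sim P_k}[d_W(P,D_X)].
\end{align*}

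Next, I bound each of the two terms by $\E_{X\sim P_k}[d_k(O,X)]$. For the first term, $\rho$-approximation and the fact that $O\in\X_k$ give $d_W(D_S,P)\leq\rho\cdot\min_T d_W(D_T,P)\leq\rho\cdot d_W(D_O,P)$. Since $P=\E_{X\sim P_k}[D_X]$ and $d_W(D_O,\cdot)$ is convex (couplings of a mixture can be mixed), Jensen yields $d_W(D_O,P)\leq \E_{X\sim P_k}[d_W(D_O,D_X)]$. Hence $k\cdot d_W(D_S,P)\leq \rho\cdot\E_{X\sim P_k}[d_k(O,X)]$.

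For the second term, the naive bound $d_W(P,D_X)\leq d_W(P,D_O)+d_W(D_O,D_X)$ would only give a factor $2$ here and a total of $\rho+3$. To reach $\rho+2$, I instead prove directly that $\E_X[d_W(P,D_X)]\leq \E_X[d_W(D_O,D_X)]$, i.e., that the fractional distribution $P$ is a better "center" than any $k$-point set. I would do this with an independent copy $X'\sim P_k$. Since $P=\E_{X'}[D_{X'}]$, convexity gives $d_W(P,D_X)\leq \E_{X'}[d_W(D_{X'},D_X)]$. By the triangle inequality through $D_O$, this is at most $\E_{X'}[d_W(D_{X'},D_O)]+d_W(D_O,D_X)$, which is again $2\,\E[d_W(D_O,D_X)]$. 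So this route also loses.

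The alternative is to use Theorem~\ref{thm:rp-main}-type reasoning differently. Combining the first inequality not through $P$ but through $D_O$ gives $d_k(S,X)\leq d_k(S,O)+d_k(O,X)$, where $d_k(S,O)=k\,d_W(D_S,D_O)\leq k(d_W(D_S,P)+d_W(P,D_O))\leq k(\rho+1)\,d_W(P,D_O)$. By the Jensen step, this is at most $(\rho+1)\,\E_X[d_k(O,X)]$. Adding the $d_k(O,X)$ term yields exactly $(\rho+2)\,\E_X[d_k(O,X)]$, so this routing through $O$ is the proof I will write.

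The only nontrivial ingredient, and the step needing care, is the convexity/Jensen inequality $d_W(D_O,P)\leq\E_{X\sim P_k}[d_W(D_O,D_X)]$. I would justify it by taking an optimal coupling $\pi_X$ between $D_O$ and $D_X$ for each $X$. The mixture $\E_X[\pi_X]$ has marginals $D_O$ and $\E_X[D_X]=P$, and its cost is $\E_X[d_W(D_O,D_X)]$. This follows because each $X$ consists of $k$ i.i.d. draws from $P$, so the expected multiplicity of $x$ is $kP(x)$.
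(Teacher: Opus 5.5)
Your final argument is correct and is essentially the paper's proof. Both bound $d_W(D_S,D_X)$ by the same three terms along $D_S\to P\to D_O\to D_X$, then apply $d_W(D_S,P)\le\rho\, d_W(D_O,P)$ and the Jensen/coupling-mixture bound $d_W(D_O,P)\le\E_{X\sim P_k}[d_W(D_O,D_X)]$; you merely group the terms as $d_k(S,O)+d_k(O,X)$. The ``naive'' route you discarded is exactly the paper's proof and already gives $\rho+2$, not $\rho+3$, since the first term contributes $\rho$ and the second contributes $2$.
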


\begin{proof}
By the definition of the algorithm and its approximation guarantee, it holds that
\begin{align}\label{eq:uckm}
d_W(D_S,P) &\leq \rho\cdot \min_{T\in \X_k}{d_W(D_T,P)} \leq \rho\cdot d_W(D_O,P).
\end{align}
Thus, we have
\begin{align*}
    \E_{X\sim P_k}\left[d_k(S,X)\right] &= k\cdot \E_{X\sim P_k}[d_W(D_S,D_X)]\\
    &\leq k\cdot \E_{X\sim P_k}[d_W(D_S,P)+d_W(P,D_O)+d_W(D_O,D_X)]\\
    &\leq k\cdot \left(\rho\cdot d_W(D_O,P)+d_W(P,D_O)+\E_{X\sim P_k}[d_W(D_O,D_X)]\right)\\
    &\leq k\cdot \left(\rho+2\right)\cdot \E_{X\sim P_k}[d_W(D_O,D_X)]\\
    &= (\rho+2)\cdot \E_{X\sim P_k}[d_k(O,X)],
\end{align*}
as desired. The first and last equalities follow by the relation between the distances $d_W$ and $d_k$. The first inequality follows by the triangle inequality for the Wasserstein distance. The second one follows by Equation~(\ref{eq:uckm}), and the third one by Jensen inequality applied to the convex function $d_W(D_O,\cdot)$, which implies that $d_W(D_O,P)\leq \E_{X\sim P_k}[d_W(D_O,D_X)]$.
\end{proof}

\section{A restricted variant of robotaxi placement}\label{sec:rrp}
We now consider the following restricted variant of robotaxi placement. An instance of $k$-RRP (standing for restricted robotaxi placement) consists of a metric space $(X,d)$ and a probability distribution over the points of $X$, together with a set of allowable points $A\subseteq X$. Our objective is to select a $k$-multiset $S$ with points from $A$ so that the quantity $\E_{X\sim P_k}[d_k(S,X)]$ is minimized. Essentially, this variant restricts the solution to consist of points in the allowable set $A$.

We define the following algorithm, which we refer to as {\em random restricted placement} (RRP). RRP first selects a random $k$-multiset $S$ according to the distribution $P_k$ (i.e., similarly to algorithm RP), and returns a $k$-multiset $T(S)$ of points from $A$ that minimizes $d_k(S,T(S))$.

\begin{theorem}
    Consider a $k$-RRP instance and let $O$ be the optimal $k$-multiset, $S$ the intermediate random $k$-multiset according to $P_k$ computed by algorithm RRP, and $T(S)$ the final $k$-multiset of points from $A$ it returns. It holds that $\E_{S\sim P_k}\E_{X\sim P_k}\left[d_k(T(S),X)\right] \leq 3\cdot \E_{X\sim P_k}\left[d_k(O,X)\right]$.
\end{theorem}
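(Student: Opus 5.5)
The plan is to reuse the triangle inequality of Lemma~\ref{lem:metric} twice, together with the fact that $T(S)$ is the closest $A$-restricted $k$-multiset to $S$. Here $O$ denotes the optimal restricted $k$-multiset, so all points of $O$ lie in $A$.

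For a fixed $S$ and a fixed rider multiset $X$, I first apply the triangle inequality in $(\X_k,d_k)$:
\begin{align*}
d_k(T(S),X) \leq d_k(T(S),S) + d_k(S,X).
\end{align*}
Since $O$ is itself a $k$-multiset of points from $A$, the definition of $T(S)$ as a minimizer of $d_k(S,\cdot)$ over $A$-multisets gives $d_k(S,T(S)) \leq d_k(S,O)$. Hence
\begin{align*}
d_k(T(S),X) \leq d_k(S,O) + d_k(S,X).
\end{align*}

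Next I take expectations over $S\sim P_k$ and $X\sim P_k$ independently.

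The first term gives $\E_{S\sim P_k}[d_k(S,O)]$. Because $S$ has the same distribution as $X$, this equals $\E_{X\sim P_k}[d_k(O,X)]$.

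For the second term, I bound $d_k(S,X)\leq d_k(S,O)+d_k(O,X)$ by the triangle inequality once more, exactly as in the proof of Theorem~\ref{thm:rp-main}. This yields $\E\E[d_k(S,X)]\leq 2\,\E_{X\sim P_k}[d_k(O,X)]$. Note that this step does not need $O$ to be unrestricted optimal; it only needs $O$ to be some fixed $k$-multiset, so it applies verbatim with the restricted optimum.

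Summing the two terms gives the factor $3$.

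I do not expect a real obstacle here. The only points to check are that the comparison $d_k(S,T(S))\le d_k(S,O)$ is legitimate, which holds because $O\subseteq A$, and that the analysis does not require computing $T(S)$ optimally. If $T(S)$ were computed only approximately this would matter, but it is in fact computable exactly, since $T(S)$ places each point of $S$ at its nearest point of $A$. Thus $T(S)$ is the closest $A$-multiset to $S$, and the algorithm is polynomial.
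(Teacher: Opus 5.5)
Your proof is correct and is essentially the paper's argument: the paper chains $d_k(T(S),X)\le d_k(T(S),S)+d_k(S,O)+d_k(O,X)$ in one step and then uses $d_k(T(S),S)\le d_k(O,S)$, which is valid because $O$ uses only points of $A$, exactly as you do. Your extra remark is also correct: $T(S)$ can be obtained by moving each point of $S$ to its nearest point of $A$, since repetitions are allowed in $T(S)$.
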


\begin{proof}
We have that 
\begin{align*}
    \E_{S\sim P_k}\E_{X\sim P_k}\left[d_k(T(S),X)\right] &\leq \E_{S\sim P_k}\E_{X\sim P_k}\left[d_k(T(S),S)+d_k(S,O)+d_k(O,X)\right]\\
    &=\E_{S\sim P_k}\left[d_k(T(S),S)\right]+2\cdot \E_{X\sim P_k}\left[d_k(O,X)\right]\\
    &\leq \E_{S\sim P_k}\E_{X\sim P_k}\left[d_k(O,S)\right]+2\cdot \E_{X\sim P_k}\left[d_k(O,X)\right]\\
    &=3\cdot \E_{X\sim P_k}\left[d_k(O,X)\right],
\end{align*}
as desired. The first inequality follows from the triangle inequality (Lemma~\ref{lem:metric}) and the second one from the definition of $k$-multiset $T(S)$, which implies that $d_k(T(S),S)\leq d_k(O,S)$. The equalities are straightforward, since both $k$-multisets $X$ and $S$ follow the same distribution $P_k$.
\end{proof}
\end{document}